\def\reusesizes{
\usepackage[a4paper]{geometry}
\setlength{\textheight}{19.8cm}
\setlength{\textwidth}{13.5cm}
\setlength{\oddsidemargin}{1.2cm}
\setlength{\evensidemargin}{1.2cm}
\setlength{\topmargin}{1.65cm} 
}
\title{Graph classes equivalent to 12-representable graphs}
\author[A.~Takaoka]{Asahi~Takaoka}
\address{
  College of Information and Systems, 
  Muroran Institute of Technology, 
  Mizumoto 27-1, Muroran, 
  Hokkaido, 050--8585, Japan 
}
\email{takaoka@mmm.muroran-it.ac.jp}
\date{\today}
\subjclass[2010]{05C62, 05C75}
\keywords{12-representable graphs, forbidden induced subgraphs, interval containment bigraphs, simple-triangle graphs, vertex ordering characterization}
\newtheorem{theorem}{Theorem}
\newtheorem{cor}[theorem]{Corollary}
\theoremstyle{definition}
\newtheorem{dnt}[theorem]{Definition}
\newtheorem{exm}[theorem]{Example}
\newtheorem{rem}[theorem]{Remark}
\def\red{\textup{red}}
\begin{document}
\maketitle
\begin{abstract}
Jones et al. (2015) introduced the notion of $u$-representable graphs, 
where $u$ is a word over $\{1, 2\}$ different from $22\cdots2$, 
as a generalization of word-representable graphs. 
Kitaev (2016) showed that if $u$ is of length at least 3, 
then every graph is $u$-representable. 
This indicates that there are only two nontrivial 
classes in the theory of $u$-representable graphs: 
11-representable graphs, which correspond to word-representable graphs, 
and 12-representable graphs. 
This study deals with 12-representable graphs. 
\par
Jones et al. (2015) provided 
a characterization of 12-representable trees in terms of forbidden induced subgraphs. 
Chen and Kitaev (2022) presented 
a forbidden induced subgraph characterization of a subclass of 12-representable grid graphs. 
\par
This paper shows that a bipartite graph is 12-representable 
if and only if it is an interval containment bigraph. 
The equivalence gives us a forbidden induced subgraph characterization 
of 12-representable bipartite graphs 
since the list of minimal forbidden induced subgraphs is known 
for interval containment bigraphs. 
We then have a forbidden induced subgraph characterization for grid graphs, which solves an open problem of Chen and Kitaev (2022). 
The study also shows that a graph is 12-representable 
if and only if it is the complement of a simple-triangle graph. 
This equivalence indicates that 
a necessary condition for 12-representability presented by Jones et al. (2015) is also sufficient. 
Finally, we show from these equivalences that 12-representability can be determined 
in $O(n^2)$ time for bipartite graphs and 
in $O(n(\bar{m}+n))$ time for arbitrary graphs, 
where $n$ and $\bar{m}$ are the number of vertices and edges of the complement of the given graph. 

\end{abstract}

\section{Introduction}
The notion of $u$-representable graphs, 
where $u$ is a word over $\{1, 2\}$ different from $22\cdots2$, 
was introduced by Jones et al.~\cite{JKPR15-EJC} 
as a generalization of a well-studied class of 
word-representable graphs~\cite{KL15-book,Kitaev17-LNCS}. 
In this context, word-representable graphs 
correspond to 11-representable graphs. 
\par
Jones et al.~\cite{JKPR15-EJC} showed that 
any graph is $1^k$-representable for every $k \geq 3$, 
where $1^k$ denotes $k$ concatenated copies of 1. 
Extending this result, Kitaev~\cite{Kitaev17-JGT} showed that 
for every $u \in \{1, 2\}^*$ of length at least 3, any graph is $u$-representable. 
Therefore, only two graph classes are nontrivial 
in the theory of $u$-representable graphs: 
11-representable graphs and 12-representable graphs. 
This paper focuses on 12-representable graphs. 
Note that the class of 21-representable graphs is equivalent to 
that of 12-representable graphs, as shown in the next section. 
\par
The class of 12-representable graphs is 
a proper subclass of comparability graphs and 
a proper superclass of 
both co-interval graphs and permutation graphs~\cite{JKPR15-EJC}. 
The class of 12-representable graphs is not equivalent to 
that of 11-representable graphs 
since word-representable graphs (i.e., 11-representable graphs) 
generalize comparability graphs~\cite{KL15-book,Kitaev17-LNCS}. 
It is also known that 
any cycle of length at least 5 is not 12-representable~\cite{JKPR15-EJC}. 
This implies that 12-representable graphs are weakly chordal 
since the graph $\overline{C_n}$ (the complement of the cycle of length $n$) 
is not a comparability graph for any $n \geq 5$, see, e.g.,~\cite{Gallai67} and~\cite[Corollary 2.11]{GT04}. 
\par
Jones et al.~\cite{JKPR15-EJC} showed that 
a tree is 12-representable if and only if it is a \emph{double caterpillar}, 
a tree in which every vertex is within distance 2 from a central path. 
It is easy to see that a tree is a double caterpillar if and only if 
it contains no $T_3$ as a subtree (see, e.g.,~\cite[Lemma~18]{STU10-DAM}), 
where $T_3$ is the tree in Figure~\ref{fig:Forbidden subgraphs}\subref{fig:T3}. 
They also initiated the study of the 12-representability of grid graphs. 
They provided some 12-representable grid graphs and 
asked whether such graphs could be characterized.
We use the term \emph{grid graph} in this paper to mean 
an induced subgraph of a rectangular grid graph. 
\par
Chen and Kitaev~\cite{CK22-DMGT} answered this question. 
They called a grid graph a \emph{square grid graph} 
if every edge belongs to a cycle of length 4, 
and showed that a square grid graph is 12-representable if and only if 
it contains no $X$ and no cycle of length $2n$ for $n \geq 4$ 
as an induced subgraph, 
where $X$ is the graph in Figure~\ref{fig:Forbidden subgraphs}\subref{fig:X}. 
They also provided a conjecture for characterizing 12-representable \emph{line grid graph}, 
grid graphs that are not square grid graphs~\cite[Conjecture 3.6]{CK22-DMGT}. 
We will deal with this conjecture in Remark~\ref{rem:Forbidden subgraphs}. 
\par
Meanwhile, Jones et al.~\cite{JKPR15-EJC} gave 
a necessary condition for the 12-representability of a graph 
in terms of graph labelings (see Theorem~\ref{thm:good labeling}). 
Chen and Kitaev~\cite{CK22-DMGT} showed that the necessary condition 
is also sufficient for square grid graphs. 
Whether the condition is sufficient for arbitrary graphs 
was left as an open question~\cite{CK22-DMGT}. 
\par
This study shows that a bipartite graph is 12-representable 
if and only if it is an interval containment bigraph~\cite{Huang18-DAM}. 
We also demonstrate that a graph is 12-representable 
if and only if it is the complement of a simple-triangle graph~\cite{CK87-CN}. 
These equivalences provide some structural results on 12-representable graphs. 
In particular, we obtain a forbidden induced subgraph characterization 
of 12-representable bipartite graphs and then also for grid graphs. 
Moreover, we obtain from a characterization of simple-triangle graphs~\cite{Takaoka18-DM} 
that the necessary condition of Jones et al.~\cite{JKPR15-EJC} mentioned above is in fact also sufficient. 

\section{Preliminaries}
This section presents some definitions, notations, and results used in this paper. 

All graphs in this paper are finite, simple, and undirected. 
We write $xy$ for the edge joining two vertices $x$ and $y$. 
For a graph $G$, we write $V(G)$ and $E(G)$ 
for the vertex set and the edge set of $G$, respectively. 
We usually denote the number of vertices and edges by $n$ and $m$, respectively. 
The \emph{complement} of a graph $G$ is the graph $\overline{G}$ 
such that $V(\overline{G}) = V(G)$ and 
$xy \in E(\overline{G})$ if and only if $xy \notin E(G)$ 
for any two distinct vertices $x, y \in V(\overline{G})$.

\subsection{Words and 12-representable graphs}
For a positive integer $n$, 
let $[n] = \{1, 2, \ldots, n\}$ and $[n]^*$ be the set of all words over $[n]$. 
For a word $w \in [n]^*$, 
let $A(w)$ denote the set of letters occurring in $w$. 
For a subset $B \subseteq A(w)$, 
let $w_B$ be a word obtained from $w$ by removing all the letters of $A(w) \setminus B$. 
For a word $w \in [n]^*$, the \emph{reduced form} of $w$, denoted by $\red(w)$, 
is the word obtained from $w$ by replacing 
each occurrence of the $i$th smallest letter with $i$. 
Let $u = u_1 u_2 \cdots u_k$ with $\red(u) = u$. 
A word $w = w_1 w_2 \cdots w_m$ of $[n]^*$ has a \emph{$u$-match} 
if there is an index $i$ such that $\red(w_i w_{i+1} \cdots w_{i+k-1}) = u$, 
that is, up to reduction, $u$ occurs consecutively in $w$. 
\par
A \emph{labeled graph} of a graph $G$ is obtained from $G$ 
by assigning an integer (label) to each vertex. 
This paper assumes that all labels are distinct and from $[n]$, 
where $n$ denotes the number of vertices of the graph. 
Given a word $u \in [2]^*$ such that $\red(u) = u$ 
(i.e., $u$ is different from $22\cdots2$), 
a labeled graph $G$ is \emph{$u$-representable} if 
there is a word $w \in [n]^*$ 
such that $A(w) = [n]$ and for any $x, y \in V(G)$, 
$xy \in E(G)$ if and only if $w_{\{x, y\}}$ has no $u$-matches. 
In this case, we say that the word $w$ \emph{$u$-represents} the graph $G$ 
and $w$ is a \emph{$u$-representant} of $G$. 
An unlabeled graph $H$ is \emph{$u$-representable} if 
there is a labeling of $H$ such that the resulting labeled graph $H'$ is $u$-representable. 
\par
By definition, the class of $u$-representable graphs is hereditary 
(i.e., closed under taking induced subgraphs). 
Note also that the class of $u$-representable graphs is equivalent 
to that of $u^r$-representable graphs, 
where $u^r$ denotes the reverse of $u$, 
since if a word $w$ is a $u$-representant of a graph $G$, 
then its reverse $w^r$ is a $u^r$-representant of $G$ and vice versa. 
Thus, as noted in the introduction, 
the classes of 12-representable and 21-representable graphs 
are equivalent.

\subsection{Necessary condition}
Given a labeled graph $G$, 
the \emph{reduced form} of $G$, denoted by $\red(G)$, 
is the labeled graph obtained from $G$ by relabeling 
so that the $i$th smallest label is replaced by $i$. 
For a graph $G$, a graph $H$ is an \emph{induced subgraph} 
if $V(H) \subseteq V(G)$ and $xy \in E(H) \iff xy \in E(G)$ 
for any $x, y \in V(H)$. 
We will use the following necessary condition to determine 12-representable graphs. 
\begin{theorem}[\cite{JKPR15-EJC}]\label{thm:good labeling}
Let $G$ be a labeled graph. 
If $G$ has an induced subgraph $H$ such that 
$\red(H)$ is equal to one of $I_3$, $J_4$, or $Q_4$ in Figure~\ref{fig:I3 J4 Q4}, 
then $G$ is not $12$-representable. 
\end{theorem}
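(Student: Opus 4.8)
The plan is to prove the contrapositive in the natural direction: assume $G$ is a labeled graph that is $12$-representable, with $12$-representant $w \in [n]^*$, and show that $G$ cannot contain an induced subgraph $H$ with $\red(H)$ equal to $I_3$, $J_4$, or $Q_4$. Since the class of $12$-representable graphs is hereditary and $12$-representability is unaffected by passing to $\red(H)$ (relabeling by an order-preserving bijection corresponds to applying the same order-preserving bijection to the letters of $w$, which preserves all $12$-matches), it suffices to verify the three base cases: that none of the labeled graphs $I_3$, $J_4$, $Q_4$ is itself $12$-representable.

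The key combinatorial fact to extract first is a local criterion for when $w_{\{x,y\}}$ has a $12$-match. For two labels $x < y$, the word $w_{\{x,y\}}$ is a string over $\{x,y\}$, and it has a $12$-match precisely when some occurrence of $x$ comes before some occurrence of $y$ — equivalently, when the first letter of $w_{\{x,y\}}$ is $x$, i.e. the leftmost occurrence among all copies of $x$ and $y$ in $w$ is a copy of $x$ (the smaller label). Hence, writing $\mathrm{first}(z)$ for the position in $w$ of the leftmost occurrence of letter $z$, we get: for $x < y$, the pair $\{x,y\}$ is a non-edge of $G$ iff $\mathrm{first}(x) < \mathrm{first}(y)$, and an edge iff $\mathrm{first}(x) > \mathrm{first}(y)$. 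In other words, ordering the vertices by $\mathrm{first}$ gives a linear order in which $x \to y$ (for $x<y$ with $\mathrm{first}(x) > \mathrm{first}(y)$) recording exactly the edges; this is the standard observation that $12$-representable graphs carry a transitive orientation "toward larger labels appearing earlier." I would state and prove this lemma carefully, since everything else is a finite check against it.

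With this in hand, each of the three cases becomes a short impossibility argument about orderings. For $I_3$ one examines the (few) possible relative orders of the first-occurrence positions of its three vertices consistent with its labels $\red(I_3)$, and shows each forces either a missing edge or an extra edge; the same for $J_4$ and $Q_4$, where the label constraints together with the edge/non-edge pattern over-determine the relative order of $\mathrm{first}$ on the four vertices and yield a contradiction in every branch. These are finite case analyses driven entirely by the inequality $\mathrm{first}(x) \lessgtr \mathrm{first}(y)$ dictated by adjacency together with $x \lessgtr y$ dictated by the labeling.

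The main obstacle is not conceptual but bookkeeping: one must be careful that the labels shown in Figure~\ref{fig:I3 J4 Q4} (after reduction) are used correctly, so that "smaller label appears earlier $\Leftrightarrow$ non-edge" is applied with the right inequality on every pair, and one must make sure the case split over the linear extensions of $\mathrm{first}$ is exhaustive. I expect the $Q_4$ case to be the most delicate, since it has the most vertices and hence the most orderings to rule out; a clean way to handle it is to fix the position of the extreme labels first and propagate the forced comparisons, rather than enumerating all $4!$ orders blindly.
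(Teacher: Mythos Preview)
Your key lemma is false. You assert that for labels $x<y$ the restriction $w_{\{x,y\}}$ has a $12$-match if and only if the first letter of $w_{\{x,y\}}$ is $x$, i.e.\ $\mathrm{first}(x)<\mathrm{first}(y)$. One direction is fine: if there is no $12$-match then every $y$ precedes every $x$, so in particular $\mathrm{first}(y)<\mathrm{first}(x)$. But the converse fails: take $w_{\{x,y\}}=yxy$. Here $\mathrm{first}(y)<\mathrm{first}(x)$, yet positions $2$--$3$ form the consecutive pair $xy$, a $12$-match. (The paper itself exhibits exactly this phenomenon in the proof of Theorem~\ref{thm:ICB}, where $w_{\{i,j\}}=jiji$ with $i<j$ is used as a word that \emph{does} have a $12$-match.) The correct local criterion is: for $x<y$, the pair is an edge iff $\mathrm{last}(y)<\mathrm{first}(x)$, and a non-edge iff $\mathrm{first}(x)<\mathrm{last}(y)$. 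Adjacency is not recoverable from the linear order of first occurrences alone.

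This breaks your case analysis. From edges you may legitimately deduce inequalities of the form $\mathrm{last}(\cdot)<\mathrm{first}(\cdot)$, but from non-edges you only get $\mathrm{first}(\cdot)<\mathrm{last}(\cdot)$, not the comparison of first occurrences that you use. For example, in $J_4$ the non-edge $\{2,3\}$ yields only $\mathrm{first}(2)<\mathrm{last}(3)$, not $\mathrm{first}(2)<\mathrm{first}(3)$, so your chain of forced comparisons does not close up as written. Your overall plan (reduce to the three labeled base cases and do a finite check) is sound, and the paper itself does not give a proof here but merely cites \cite{JKPR15-EJC}; however, you must redo the bookkeeping using both $\mathrm{first}$ and $\mathrm{last}$ and chain the mixed inequalities to reach the contradictions.
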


\begin{figure}[ht]
  \centering
  \subcaptionbox{\label{fig:I3}}{\begin{tikzpicture}
\useasboundingbox (-1.5, -.7) rectangle (1.5, 0.7);
\tikzstyle{every node}=[draw,circle,fill=white,minimum size=5pt,inner sep=0pt]
\node [label=above:$1$] (x) at (-1, 0) {};
\node [label=above:$2$] (y) at ( 0, 0) {};
\node [label=above:$3$] (z) at ( 1, 0) {};
\draw [] (x) -- (y) -- (z);
\end{tikzpicture}}
  \subcaptionbox{\label{fig:J4}}{\begin{tikzpicture}
\useasboundingbox (-1.5, -.7) rectangle (1.5, 0.7);
\tikzstyle{every node}=[draw,circle,fill=white,minimum size=5pt,inner sep=0pt]
\node [label= left:$1$] (x) at (-.5,  .5) {};
\node [label=right:$3$] (y) at ( .5,  .5) {};
\node [label= left:$2$] (z) at (-.5, -.5) {};
\node [label=right:$4$] (w) at ( .5, -.5) {};
\draw [] (x) -- (y) (z) -- (w);
\end{tikzpicture}}
  \subcaptionbox{\label{fig:Q4}}{\begin{tikzpicture}
\useasboundingbox (-1.5, -.7) rectangle (1.5, 0.7);
\tikzstyle{every node}=[draw,circle,fill=white,minimum size=5pt,inner sep=0pt]
\node [label= left:$1$] (x) at (-.5,  .5) {};
\node [label=right:$4$] (y) at ( .5,  .5) {};
\node [label= left:$2$] (z) at (-.5, -.5) {};
\node [label=right:$3$] (w) at ( .5, -.5) {};
\draw [] (x) -- (y) (z) -- (w);
\end{tikzpicture}}
  \caption{Forbidden labeled graphs $I_3$~\subref{fig:I3}, $J_4$~\subref{fig:J4}, and $Q_4$~\subref{fig:Q4}. }
  \label{fig:I3 J4 Q4}
\end{figure}

We now define the notion of $F$-free labeling. 
\begin{dnt}\label{def:F-free}
Let $F$ be a set of labeled graphs. 
(Recall that we assume all labels are distinct and from $[n]$, where $n$ is the number of vertices of the graph.) 
A graph labeling is \emph{$F$-free} 
if it contains no induced subgraphs in $F$ in the reduced form. 
\end{dnt}
Note that $\{I_3, J_4, Q_4\}$-free labeling is said to be \emph{good} 
by Chen and Kitaev~\cite{CK22-DMGT}. 
They showed that the existence of a good labeling for a square grid graph 
implies that the graph is 12-representable.

\subsection{Interval containment bigraphs}
A graph $G$ is \emph{bipartite} if $V(G)$ can be partitioned into 
two independent sets $X$ and $Y$. 
Such a partition $(X, Y)$ is called a \emph{bipartition} of $G$. 
A bipartite graph $G$ with bipartition $(X, Y)$ is 
an \emph{interval containment bigraph}~\cite{Huang18-DAM} 
if there is an interval $I_v$ for each vertex $v \in V(G)$ 
such that for any $x \in X$ and $y \in Y$, 
$xy \in E(G)$ if and only if $I_x$ contains $I_y$. 
The set $\{I_v \colon\ v \in V(G)\}$ is called a 
\emph{model} or \emph{representation} of $G$. 
See Figures~\ref{fig:ICB}\subref{fig:ICB-graph} and~\ref{fig:ICB}\subref{fig:ICB-model} for example. 

\begin{figure}[ht]
  \centering
  \subcaptionbox{\label{fig:ICB-graph}}{\begin{tikzpicture}
\def\len{0.8}
\useasboundingbox (-2.5*\len, -1.5*\len) rectangle (2.5*\len, 2*\len);
\tikzstyle{every node}=[draw,circle,fill=white,minimum size=5pt,inner sep=0pt]
\node [label=above:4]        (b) at ($(45:\len*1) + (135:\len*1)$) {};
\node [label=above right:8, fill=black]  (c) at ($(45:\len*1) + (135:\len*0)$) {};
\node [label=right:6]        (d) at ($(45:\len*1) + (135:\len*-1)$) {};
\node [label=below:7, fill=black]        (e) at ($(45:\len*0) + (135:\len*-1)$) {};
\node [label=above:1]        (f) at (0, 0) {};
\node [label=above left:5, fill=black]   (g) at ($(45:\len*0) + (135:\len*1)$) {};
\node [label=left:2]         (h) at ($(45:\len*-1) + (135:\len*1)$) {};
\node [label=below:3, fill=black]        (i) at ($(45:\len*-1) + (135:\len*0)$) {};
\draw [] 
	(b) -- (c) -- (d)
	(e) -- (f) -- (g)
	(h) -- (i)
	(b) -- (g) -- (h)
	(c) -- (f) -- (i)
	(d) -- (e)
;
\end{tikzpicture}}
  \subcaptionbox{\label{fig:ICB-model}}{\begin{tikzpicture}
\def\xlen{0.45}
\def\ylen{0.4}
\useasboundingbox (-1*\xlen, -3*\ylen) rectangle (18.5*\xlen, 4*\ylen);
\tikzstyle{every node}=[inner sep=0.5pt]
\node [label=left:$X$] at (0.5*\xlen, 3/2*\ylen) {};
\node [label=left:$Y$] at (0.5*\xlen, -3/2*\ylen) {};
\draw [] (0, 0) -- (18*\xlen, 0);
\node [label=left:1] (1) at (1*\xlen, 3*\ylen) {};
\node [label=right:1] (1') at ($(1) + (16*\xlen, 0)$) {};
\draw [thick,{|-|}] (1) -- (1');
\node [label=left:2] (2) at (2*\xlen, 2*\ylen) {};
\node [label=right:2] (2') at ($(2) + (6*\xlen, 0)$) {};
\draw [thick,{|-|}] (2) -- (2');
\node [label=left:3] (3) at (3*\xlen, -1*\ylen) {};
\node [label=right:3] (3') at ($(3) + (4*\xlen, 0)$) {};
\draw [thick,{|-|}] (3) -- (3');
\node [label=left:4] (4) at (4*\xlen, 1*\ylen) {};
\node [label=right:4] (4') at ($(4) + (10*\xlen, 0)$) {};
\draw [thick,{|-|}] (4) -- (4');
\node [label=left:5] (5) at (5*\xlen, -2*\ylen) {};
\node [label=right:5] (5') at ($(5) + (1*\xlen, 0)$) {};
\draw [thick,{|-|}] (5) -- (5');
\node [label=left:6] (6) at (10*\xlen, 2*\ylen) {};
\node [label=right:6] (6') at ($(6) + (6*\xlen, 0)$) {};
\draw [thick,{|-|}] (6) -- (6');
\node [label=left:7] (7) at (11*\xlen, -1*\ylen) {};
\node [label=right:7] (7') at ($(7) + (4*\xlen, 0)$) {};
\draw [thick,{|-|}] (7) -- (7');
\node [label=left:8] (8) at (12*\xlen, -2*\ylen) {};
\node [label=right:8] (8') at ($(8) + (1*\xlen, 0)$) {};
\draw [thick,{|-|}] (8) -- (8');
\end{tikzpicture}}
  \subcaptionbox{\label{fig:ICB-ordering}}{\begin{tikzpicture}
\def\len{0.8}
\useasboundingbox (-0.5, -0.5) rectangle (7*\len + 0.5, 1.3);
\tikzstyle{every node}=[draw,circle,fill=white,minimum size=5pt,inner sep=0pt]
\node [label=below:$1$            ] (v1) at (0*\len, 0) {};
\node [label=below:$2$            ] (v2) at (1*\len, 0) {};
\node [label=below:$3$, fill=black] (v3) at (2*\len, 0) {};
\node [label=below:$4$            ] (v4) at (3*\len, 0) {};
\node [label=below:$5$, fill=black] (v5) at (4*\len, 0) {};
\node [label=below:$6$            ] (v6) at (5*\len, 0) {};
\node [label=below:$7$, fill=black] (v7) at (6*\len, 0) {};
\node [label=below:$8$, fill=black] (v8) at (7*\len, 0) {};
\draw [] (v1) to [out=45, in=135] (v3);
\draw [] (v1) to [out=45, in=135] (v5);
\draw [] (v1) to [out=45, in=135] (v7);
\draw [] (v1) to [out=45, in=135] (v8);
\draw [] (v2) to (v3);
\draw [] (v2) to [out=45, in=135] (v5);
\draw [] (v4) to (v5);
\draw [] (v4) to [out=45, in=135] (v8);
\draw [] (v6) to (v7);
\draw [] (v6) to [out=45, in=135] (v8);
\end{tikzpicture}}
  \caption{
    \subref{fig:ICB-graph} An interval containment bigraph $G_1$. 
    \subref{fig:ICB-model} A model of $G_1$. 
    \subref{fig:ICB-ordering} An ordering of the vertices of $G_1$. 
    White and black vertices are in $X$ and $Y$, respectively. 
    The vertices are labeled based on the left endpoints of the intervals. 
    As shown in Example~\ref{ex:ICB}, 
    the word $w = 3578.53284761.1246$ is a 12-representant of $G_1$ 
    (the dots are not part of the word, 
     they are only included as delimiters of the word parts 
     as constructed in the proof of Theorem~\ref{thm:ICB}). 
    }
  \label{fig:ICB}
\end{figure}
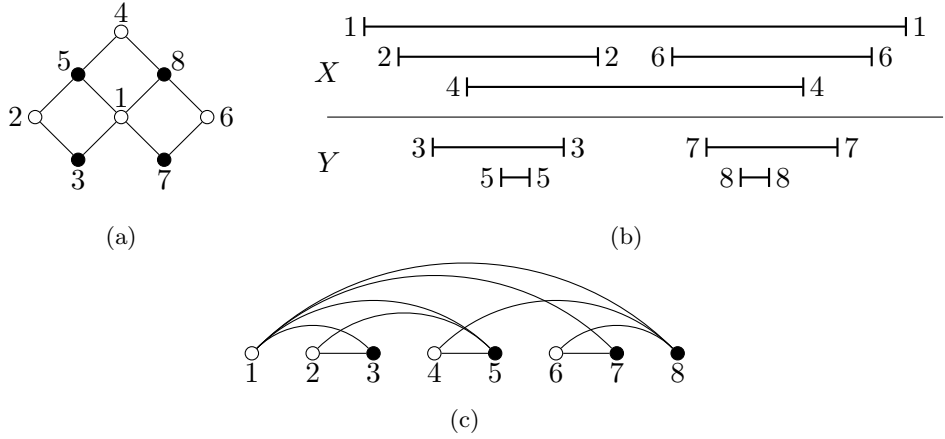

The class of interval containment bigraphs is equivalent to 
some classes of graphs, such as 
bipartite graphs whose complements are circular-arc graphs~\cite{FHH99-Combinatorica} and 
two-directional orthogonal ray graphs~\cite{STU10-DAM}. 
We will use this equivalence in Section~\ref{sec:ICB}. 
The other equivalent classes can be found in~\cite{SBSW14-DAM,TTU14-IEICE}. 
Among those, we choose the model of interval containment bigraphs 
because of the simplicity of the construction of 12-representants. 
\par
Many results have been obtained for these classes, including 
a forbidden induced subgraph characterization~\cite{TM76-DM,FHH99-Combinatorica,STU10-DAM} and 
polynomial-time recognition algorithms~\cite{STU10-DAM}. 
The class of interval containment bigraphs is a proper subclass of chordal bipartite graphs 
and a superclass of bipartite permutation graphs~\cite{STU10-DAM}.

\subsection{Simple-triangle graphs}
Let $L_1$ and $L_2$ be two horizontal lines in the plane with $L_1$ above $L_2$. 
A point on $L_1$ and an interval on $L_2$ define a triangle between $L_1$ and $L_2$. 
A graph is a \emph{simple-triangle graph} 
if there is a triangle $T_v$ for each vertex $v \in V(G)$ 
such that for any $x, y \in V(G)$, 
$xy \in E(G)$ if and only if $T_x$ intersects $T_y$. 
The set $\{T_v \colon\ v \in V(G)\}$ is called a 
\emph{model} or \emph{representation} of $G$. 
See Figures~\ref{fig:PI}\subref{fig:PI-graph} and~\ref{fig:PI}\subref{fig:PI-model} for example. 

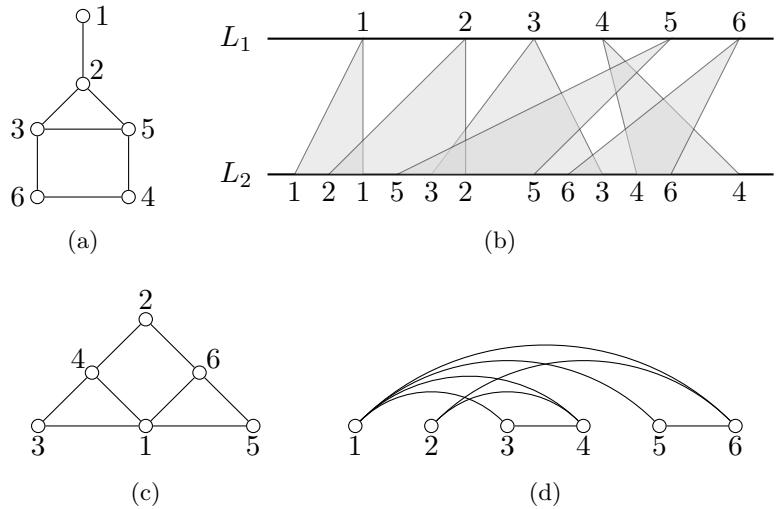
\begin{figure}[ht]
  \centering
  \subcaptionbox{\label{fig:PI-graph}}{\begin{tikzpicture}
\def\len{0.6}
\useasboundingbox (-1.4, -0.5) rectangle (1.4, 3*\len+0.5);
\tikzstyle{every node}=[draw,circle,fill=white,minimum size=5pt,inner sep=0pt]
\node [label=right:1]  (c) at (0, 3.5*\len) {};
\node [label=above right:2]  (a) at (0, 2*\len) {};
\node [label=left:3]   (b) at (-1*\len, \len) {};
\node [label=right:4]  (d) at ( 1*\len, -0.5*\len) {};
\node [label=right:5]  (e) at ( 1*\len, \len) {};
\node [label=left:6]   (f) at (-1*\len, -0.5*\len) {};
\draw [] 
	(c) -- (a) -- (e) -- (d) -- (f) -- (b) -- (a)
	(e) -- (b)
;
\end{tikzpicture}}
  \subcaptionbox{\label{fig:PI-model}}{\begin{tikzpicture}
\def\len{0.45}
\useasboundingbox (-1.5*\len, -0.5) rectangle (15.5*\len, 4*\len+0.5);
\tikzstyle{every node}=[minimum size=5pt, inner sep=0pt]
\def\La{4*\len}
\def\Lb{0*\len}
\node [label=left:$L_1$] (L1) at (0, \La) {};
\node [label=left:$L_2$] (L2) at (0, \Lb) {};
\draw [thick] (L1) -- (15*\len, \La);
\draw [thick] (L2) -- (15*\len, \Lb);
\def\pa{(3*\len, \La)}
\def\pc{(6*\len, \La)}
\def\pb{(8*\len, \La)}
\def\pd{(10*\len, \La)}
\def\pe{(12*\len, \La)}
\def\pf{(14*\len, \La)}
\node [label=above:1] at \pa {};
\node [label=above:2] at \pc {};
\node [label=above:3] at \pb {};
\node [label=above:4] at \pd {};
\node [label=above:5] at \pe {};
\node [label=above:6] at \pf {};
\def\la{(1*\len, \Lb)}
\def\lc{(2*\len, \Lb)}
\def\ra{(3*\len, \Lb)}
\def\le{(4*\len, \Lb)}
\def\lb{(5*\len, \Lb)}
\def\rc{(6*\len, \Lb)}
\def\re{(8*\len, \Lb)}
\def\lf{(9*\len, \Lb)}
\def\rb{(10*\len, \Lb)}
\def\ld{(11*\len, \Lb)}
\def\rf{(12*\len, \Lb)}
\def\rd{(14*\len, \Lb)}
\node [label=below:1] at \la {};
\node [label=below:2] at \lc {};
\node [label=below:1] at \ra {};
\node [label=below:5] at \le {};
\node [label=below:3] at \lb {};
\node [label=below:2] at \rc {};
\node [label=below:5] at \re {};
\node [label=below:6] at \lf {};
\node [label=below:3] at \rb {};
\node [label=below:4] at \ld {};
\node [label=below:6] at \rf {};
\node [label=below:4] at \rd {};
\def\op{0.5}
\draw [fill=gray!30, opacity=\op] \pa -- \la -- \ra --  cycle;
\draw [fill=gray!30, opacity=\op] \pb -- \lb -- \rb --  cycle;
\draw [fill=gray!30, opacity=\op] \pc -- \lc -- \rc --  cycle;
\draw [fill=gray!30, opacity=\op] \pd -- \ld -- \rd --  cycle;
\draw [fill=gray!30, opacity=\op] \pe -- \le -- \re --  cycle;
\draw [fill=gray!30, opacity=\op] \pf -- \lf -- \rf --  cycle;
\end{tikzpicture}}
  \subcaptionbox{\label{fig:co-PI-graph}}{\begin{tikzpicture}
\def\len{1.0}
\useasboundingbox (-2, -0.5) rectangle (2, 2.2);
\tikzstyle{every node}=[draw,circle,fill=white,minimum size=5pt,inner sep=0pt]
\node [label=above:2]        (a) at ($(45:\len*1) + (135:\len*1)$) {};
\node [label=above right:6]  (f) at ($(45:\len*1) + (135:\len*0)$) {};
\node [label=below:5]        (e) at ($(45:\len*1) + (135:\len*-1)$) {};
\node [label=below:1]        (c) at (0, 0) {};
\node [label=above left:4]   (d) at ($(45:\len*0) + (135:\len*1)$) {};
\node [label=below:3]         (b) at ($(45:\len*-1) + (135:\len*1)$) {};
\draw [] 
	(a) -- (f) -- (e) -- (c) -- (b) -- (d) -- (a)
	(d) -- (c) -- (f)
;
\end{tikzpicture}}
  \subcaptionbox{\label{fig:co-PI-ordering}}{\begin{tikzpicture}
\def\len{1.0}
\useasboundingbox (-0.5, -0.5) rectangle (5*\len + 0.5, 2.2);
\tikzstyle{every node}=[draw,circle,fill=white,minimum size=5pt,inner sep=0pt]
                        
\node [label=below:$1$] (v1) at (0*\len, 0) {};
\node [label=below:$2$] (v2) at (1*\len, 0) {};
\node [label=below:$3$] (v3) at (2*\len, 0) {};
\node [label=below:$4$] (v4) at (3*\len, 0) {};
\node [label=below:$5$] (v5) at (4*\len, 0) {};
\node [label=below:$6$] (v6) at (5*\len, 0) {};
\draw [] (v1) to [out=45, in=135] (v3);
\draw [] (v1) to [out=45, in=135] (v4);
\draw [] (v1) to [out=45, in=135] (v5);
\draw [] (v1) to [out=45, in=135] (v6);
\draw [] (v2) to [out=45, in=135] (v4);
\draw [] (v2) to [out=45, in=135] (v6);
\draw [] (v3) to (v4);
\draw [] (v5) to (v6);
\end{tikzpicture}}
  \caption{
    \subref{fig:PI-graph} A simple-triangle graph $G_2$. 
    \subref{fig:PI-model} A model of $G_2$. 
    \subref{fig:co-PI-graph} The complement $\overline{G_2}$ of $G_2$. 
    \subref{fig:co-PI-ordering} An ordering of the vertices of $\overline{G_2}$. 
    The vertices are labeled based on the points on $L_1$. 
    As shown in Example~\ref{ex:PI}, 
    the word $w = 464365235121$ is a 12-representant of $\overline{G_2}$. 
  }
  \label{fig:PI}
\end{figure}

Simple-triangle graphs were introduced in~\cite{CK87-CN} 
as a generalization of both interval graphs and permutation graphs 
and have been studied under \emph{PI graphs}~\cite{BLS99,Spinrad03}, 
where \emph{PI} stands for \emph{Point-Interval}. 
The recognition of simple-triangle graphs has been a longstanding open problem~\cite[Open Problem 13.3]{Spinrad03}, 
and some polynomial-time recognition algorithms have been presented 
recently~\cite{Mertzios15-SIAMDM,Takaoka20-DAM,Takaoka20a-DAM}. 
The class of simple-triangle graphs is 
known to be a proper subclass of trapezoid graphs~\cite{CK87-CN}. 
It is also known that a simple-triangle graph is 
a cocomparability graph and alternately orientable~\cite{Takaoka18-DM}.

\subsection{Vertex ordering characterizations}
Recall that the necessary condition for 12-representability 
(i.e., Theorem~\ref{thm:good labeling}) is stated in terms of graph labelings. 
A labeling of a graph $G$ can be viewed as ordering the vertices of $G$ 
such that $x \prec y$ in the ordering if the label of $x$ is smaller than that of $y$. 
Then, the graphs $I_3$, $J_4$, and $Q_4$ in Figure~\ref{fig:I3 J4 Q4} 
correspond to ordered graphs in~\ref{fig:VOC PI}\subref{fig:cp},~\ref{fig:VOC PI}\subref{fig:p1}, 
and~\ref{fig:VOC PI}\subref{fig:p2}, respectively. 
We will use characterizations of interval containment bigraphs and simple-triangle graphs 
defined in terms of forbidden ordered induced subgraphs. 
We will refer to such an ordered graph as a \emph{pattern}. 
\par
An example of forbidden pattern characterization is as follows. 
A graph $G$ is a \emph{comparability graph} if each edge can be oriented 
so that if $x \to y$ and $y \to z$ then $x \to z$ for any $x, y, z \in V(G)$. 
It is known that a graph $G$ is a comparability graph if and only if 
there is a vertex ordering $\sigma$ of $G$ such that 
for any $x, y, z \in V(G)$ with $x \prec y \prec z$ in $\sigma$, 
if $xy \in E(G)$ and $yz \in E(G)$ then $xz \in E(G)$. 
In other words, a graph is a comparability graph if and only if 
it has a vertex ordering which does not contain 
the pattern in Figure~\ref{fig:VOC PI}\subref{fig:cp} as an induced pattern. 
Other examples can be found in~\cite[Section 7.4]{BLS99} and \cite{FH21-SIDMA}. 

\begin{theorem}[\cite{Takaoka18-DM}]\label{thm:VOC PI}
A graph $G$ is a simple-triangle graph if and only if 
the complement $\overline{G}$ of $G$ has a vertex ordering 
which does not contain any pattern in Figure~\ref{fig:VOC PI} 
as an induced pattern. 
Moreover, for any such ordering $\sigma$, there is a model of $G$ such that 
$\sigma$ coincides with the ordering of the points on $L_1$. 
(Recall that the triangle in the model is defined by a point on $L_1$ and an interval on $L_2$.) 
Such a model of $G$ can be obtained in $O(n^2)$ time if $\sigma$ is given. 
\end{theorem}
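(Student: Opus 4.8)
The plan is to pass back and forth between a geometric model of $G$ and a vertex ordering of $\overline{G}$, with the three forbidden patterns serving as obstructions. In a model, write $a_v$ for the point of $v$ on $L_1$ and $[\ell_v,r_v]$ for its interval on $L_2$; we may assume general position, so that the $n$ points and the $2n$ endpoints are pairwise distinct. The argument rests on a single geometric observation: if $a_u$ lies to the left of $a_v$ on $L_1$, then $T_u\cap T_v=\emptyset$ if and only if $r_u<\ell_v$. For the ``if'' part, the right side of $T_u$ and the left side of $T_v$ occur in the order $a_u<a_v$ at the height of $L_1$ and in the order $r_u<\ell_v$ at the height of $L_2$, hence the right side of $T_u$ stays strictly to the left of the left side of $T_v$ and the triangles are separated. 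For the ``only if'' part, if $r_u\ge\ell_v$ then either the base intervals meet, so the triangles meet already on $L_2$, or $[\ell_v,r_v]$ lies entirely to the left of $[\ell_u,r_u]$, in which case the two triangles cross and meet in the interior. Thus, ordering the vertices from left to right along $L_1$ gives a vertex ordering $\sigma$ of $\overline{G}$ in which, for $u\prec v$, one has $uv\in E(\overline{G})$ exactly when $r_u<\ell_v$.

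For the forward direction, I would take any model of $G$ and the ordering $\sigma$ it induces, and verify that $\sigma$ contains none of the patterns of Figure~\ref{fig:VOC PI} as an induced pattern of $\overline{G}$. By the observation, each pattern translates into a short cyclic chain of inequalities among the $\ell$'s and $r$'s: the pattern corresponding to $I_3$ gives $\ell_z\le r_x<\ell_y\le r_y<\ell_z$, and those corresponding to $J_4$ and $Q_4$ give the analogous four-term cycles. Each of the three is immediately contradictory.

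For the converse, which also yields the ``moreover'' assertions, start from a vertex ordering $\sigma=(v_1,\dots,v_n)$ of $\overline{G}$ that avoids all three patterns, put the point of $v_i$ at position $i$ on $L_1$ (so that $\sigma$ is exactly the order of the points on $L_1$), and construct the intervals so that, for $i<j$, the inequality $r_{v_i}<\ell_{v_j}$ holds precisely when $v_iv_j\in E(\overline{G})$; by the observation, any such assignment makes the resulting triangle model represent $G$. To produce the intervals I would topologically sort the digraph $H$ on the $2n$ endpoint symbols whose arcs are $\ell_{v_i}\to r_{v_i}$ for every $i$, the arc $r_{v_i}\to\ell_{v_j}$ for every $i<j$ with $v_iv_j\in E(\overline{G})$, and the arc $\ell_{v_j}\to r_{v_i}$ for every $i<j$ with $v_iv_j\notin E(\overline{G})$; reading off strictly increasing real values along a topological order yields endpoints with exactly the required comparisons. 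Since $H$ has $O(n)$ vertices and $O(n^2)$ arcs, this produces a model in $O(n^2)$ time, which also establishes the claimed running time.

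The crux, and the step I expect to require real work, is to show that avoidance of the three patterns makes $H$ acyclic. Every arc of $H$ goes from an $\ell$-symbol to an $r$-symbol or vice versa, so a directed cycle alternates between the two kinds and has even length; moreover the ``$r\to\ell$'' arcs are precisely the ones that increase the index, so any cycle must use at least one ``backward'' $\ell\to r$ arc. One first rules out $2$-cycles directly and shows that a $4$-cycle must already display $I_3$, $J_4$, or $Q_4$ in $\sigma$ — these three patterns are exactly what survives the case analysis on the relative positions of the four symbols. One then argues that a minimum-length cycle cannot be longer: any arc of the form $\ell_{v_i}\to r_{v_i}$ inside a minimum cycle can be bypassed using transitivity of the forward edges of $\overline{G}$ (which holds because $I_3$ is avoided), so a minimum cycle uses only backward $\ell\to r$ arcs and index-increasing $r\to\ell$ arcs, and a chord-finding argument then pushes its length down to at most $4$. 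The bookkeeping in this last shortening argument, together with the $4$-cycle case analysis, is the technical heart; everything else is routine.
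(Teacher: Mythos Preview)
The paper does not prove Theorem~\ref{thm:VOC PI}; it is quoted from \cite{Takaoka18-DM} and used as a black box, so there is no in-paper argument to compare against.

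On its own merits your plan is the standard one and is sound. The geometric observation and the forward direction are correct and complete. For the converse, building the constraint digraph $H$ on the $2n$ endpoint symbols and topologically sorting it is exactly the right construction, and it immediately gives the $O(n^2)$ bound and the ``moreover'' clause. The only nontrivial step is the acyclicity of $H$, which you rightly flag. Two remarks on that step. First, your elimination of the arcs $\ell_{v_i}\to r_{v_i}$ in a minimum cycle via $I_3$-freeness is correct, and the $4$-cycle case does reduce to $I_3$, $J_4$, or $Q_4$ after you also record the status of the two ``extra'' pairs ($a_1a_2$ and $b_1b_2$); if either of those is an edge you get $I_3$, and if both are non-edges you land in $J_4$ or $Q_4$ depending on the relative order of $b_1,b_2$. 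Second, the shortening from length $\ge 6$ to length $4$ needs a small additional idea you have not written: in a minimum cycle with no same-vertex arcs, look at $a_1$ versus $b_2$. If $a_1<b_2$, then whichever of edge/non-edge holds for $a_1b_2$ produces a chord of $H$ and hence a strictly shorter cycle. If $a_1>b_2$, then $a_2<b_2<a_1<b_1$, and $I_3$-freeness applied to $a_2<a_1<b_1$ (with edge $a_1b_1$ and non-edge $a_2b_1$) forces $a_2a_1$ to be a non-edge, which again yields a chord. So the ``chord-finding'' really does work, but it leans on $I_3$-freeness as well as the index inequalities, and that should be made explicit. With these points filled in, your proof goes through.
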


\begin{figure}[ht]
  \centering
  \subcaptionbox{\label{fig:cp}}{\begin{tikzpicture}
\useasboundingbox (-1.6, -0.1) rectangle (1.6, 0.9);
\tikzstyle{every node}=[draw,circle,fill=white,minimum size=5pt,
                        inner sep=0pt]
\node [label=below:$$] (x) at (-1, 0) {};
\node [label=below:$$] (y) at ( 0, 0) {};
\node [label=below:$$] (z) at ( 1, 0) {};
\draw [] (x) -- (y);
\draw [] (y) -- (z);
\end{tikzpicture}}
  \subcaptionbox{\label{fig:p1}}{\begin{tikzpicture}
\useasboundingbox (-2.1, -0.1) rectangle (2.1, 0.9);
\tikzstyle{every node}=[draw,circle,fill=white,minimum size=5pt,
                        inner sep=0pt]
\node [label=below:$$] (x) at (-1.5, 0) {};
\node [label=below:$$] (y) at (-0.5, 0) {};
\node [label=below:$$] (z) at ( 0.5, 0) {};
\node [label=below:$$] (w) at ( 1.5, 0) {};
\draw [] (x) to [out=60, in=120] (z);
\draw [] (y) to [out=60, in=120] (w);
\end{tikzpicture}}
  \subcaptionbox{\label{fig:p2}}{\begin{tikzpicture}
\useasboundingbox (-2.1, -0.1) rectangle (2.1, 0.9);
\tikzstyle{every node}=[draw,circle,fill=white,minimum size=5pt,
                        inner sep=0pt]
\node [label=below:$$] (x) at (-1.5, 0) {};
\node [label=below:$$] (y) at (-0.5, 0) {};
\node [label=below:$$] (z) at ( 0.5, 0) {};
\node [label=below:$$] (w) at ( 1.5, 0) {};
\draw [] (x) to [out=60, in=120] (w);
\draw [] (y) to (z);
\end{tikzpicture}}
  \caption{Forbidden patterns of complements of simple-triangle graphs. } 
  \label{fig:VOC PI}
\end{figure}
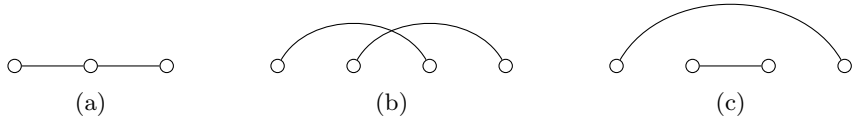

Figures~\ref{fig:PI}\subref{fig:co-PI-graph} and ~\ref{fig:PI}\subref{fig:co-PI-ordering} 
show the complement $\overline{G_2}$ of $G_2$ and its vertex ordering, 
which does not contain any pattern in Figure~\ref{fig:VOC PI}. 
\par
Notice that an $\{I_3, J_4, Q_4\}$-free labeling of a graph can be viewed as 
a vertex ordering which does not contain any pattern in Figure~\ref{fig:VOC PI}. 
Thus, we have from Theorems~\ref{thm:good labeling} and~\ref{thm:VOC PI} 
that any 12-representable graph is the complement of a simple-triangle graph. 
\par
For interval containment bigraphs, the following characterization is known. 
\begin{theorem}[\cite{Huang18-DAM}]\label{thm:VOC ICB}
A bipartite graph $G$ with bipartition $(X, Y)$ is 
an interval containment bigraph if and only if 
$G$ has a vertex ordering which does not contain any pattern in Figure~\ref{fig:VOC ICB} 
as an induced pattern. 
\end{theorem}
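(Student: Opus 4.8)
The plan is to prove both directions by exploiting the interval‑containment model together with the ordering of the vertices by left endpoint. For necessity, suppose $G$ with bipartition $(X,Y)$ has intervals $I_v=[a_v,b_v]$ with $xy\in E(G)\iff a_x\le a_y\le b_y\le b_x$ for $x\in X$ and $y\in Y$. After a standard normalization making all $2n$ endpoints distinct (when left endpoints tie, put the $X$‑intervals of the tied group just to the left of the $Y$‑intervals of that group, and symmetrically for right endpoints — one checks no containment relation changes), let $\sigma=(v_1,\dots,v_n)$ be the ordering with $a_{v_1}<\dots<a_{v_n}$. I would then show that $\sigma$ contains none of the patterns of Figure~\ref{fig:VOC ICB}: each pattern, translated through the model's defining inequalities, yields an unsatisfiable chain of inequalities among the endpoints. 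For example, the $2$‑vertex pattern would require an edge $xy$ with $a_y<a_x$, impossible since every edge forces $a_x\le a_y$; and a $4$‑vertex pattern — an ordered copy of $2K_2$ in which the two $X$‑vertices precede the two $Y$‑vertices — forces a cyclic chain such as $b_{v_2}<b_{v_4}\le b_{v_1}<b_{v_3}\le b_{v_2}$, a contradiction. Hence $\sigma$ witnesses the ``only if'' part.

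For sufficiency, suppose $\sigma=(v_1,\dots,v_n)$ avoids every pattern of Figure~\ref{fig:VOC ICB}. Avoiding the $2$‑vertex pattern means that for every edge $xy$ ($x\in X$, $y\in Y$) the vertex $x$ precedes $y$ in $\sigma$. Put $a_{v_i}:=i$, so the left endpoints realize $\sigma$; it then remains only to choose right endpoints $b_v$ so that $b_y<b_x\iff xy\in E(G)$ for all $x\in X$, $y\in Y$ with $x\prec_\sigma y$ (pairs with $y\prec_\sigma x$ become non‑edges of the constructed model automatically, which matches $G$ by the previous sentence, and same‑part pairs are unconstrained). To do this, form the digraph $D$ on $V(G)$ with an arc $u\to w$ exactly when $b_u<b_w$ is required: an arc $x\to y$ when $x\in X$, $y\in Y$, $x\prec_\sigma y$, $xy\notin E(G)$, and an arc $y\to x$ when $x\in X$, $y\in Y$, $x\prec_\sigma y$, $xy\in E(G)$. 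If $D$ is acyclic, a topological order of $D$ gives values $b_v:=n+(\text{rank of }v)$, each $I_v=[a_v,b_v]$ is a genuine interval, and by construction $xy\in E(G)\iff I_x\supseteq I_y$, so $G$ is an interval containment bigraph.

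The crux — and the step I expect to be the main obstacle — is the acyclicity of $D$, which must be extracted from avoidance of the $4$‑vertex patterns. Every arc of $D$ joins an $X$‑vertex to a $Y$‑vertex with the $X$‑endpoint earlier in $\sigma$, so a directed cycle alternates $X,Y,X,Y,\dots$ and has even length $2k$. Taking a shortest one, $x_0\to y_0\to x_1\to y_1\to\dots\to x_{k-1}\to y_{k-1}\to x_0$, one has $x_iy_i\notin E(G)$, $x_{i+1}y_i\in E(G)$, and $x_i,x_{i+1}\prec_\sigma y_i$ for all $i$ (indices mod $k$). If $k\ge 3$, comparing $x_0$ with $y_1$ produces a shortcut: if $x_0\prec_\sigma y_1$, then either $x_0y_1\notin E(G)$ (arc $x_0\to y_1$, skipping $y_0,x_1$) or $x_0y_1\in E(G)$ (arc $y_1\to x_0$, yielding a $4$‑cycle); if $y_1\prec_\sigma x_0$, then $x_2\prec_\sigma y_1\prec_\sigma x_0\prec_\sigma y_0$ forces $x_2\prec_\sigma y_0$, and comparing $x_2$ with $y_0$ again produces a shorter cycle (arc $y_0\to x_2$, or arc $x_2\to y_0$ giving the $4$‑cycle $y_0\to x_1\to y_1\to x_2\to y_0$) — contradicting minimality in every case. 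Hence $k=2$, and reading the adjacencies of $\{x_0,x_1,y_0,y_1\}$ off the four arcs shows this set induces a $2K_2$ with both $X$‑vertices before both $Y$‑vertices in $\sigma$, i.e., one of the forbidden $4$‑vertex patterns — a contradiction. So $D$ is acyclic. One could instead route through the known equivalence of interval containment bigraphs with two‑directional orthogonal ray graphs~\cite{STU10-DAM} and transfer an ordering characterization of those, but the argument above is self‑contained; the routine parts are the normalization to distinct endpoints and checking that the listed patterns are geometrically infeasible.
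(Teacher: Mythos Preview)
The paper does not prove Theorem~\ref{thm:VOC ICB}; it is quoted from~\cite{Huang18-DAM} and used as a black box, so there is no in-paper proof to compare your attempt against. Evaluating your argument on its own merits, the necessity direction (ordering by left endpoints) is essentially correct: in that ordering every edge has its $X$-endpoint before its $Y$-endpoint, which immediately kills pattern~(b) and the colour-swapped versions, and the two $XXYY$ patterns yield the cyclic chain of inequalities you describe.

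The sufficiency direction, however, has a genuine gap. You claim that Figure~\ref{fig:VOC ICB} contains a ``$2$-vertex pattern'' whose avoidance forces $x\prec_\sigma y$ for every edge $xy$ with $x\in X$, $y\in Y$. It does not: all three listed patterns (and their colour-swapped forms) have four vertices and are induced copies of $2K_2$. Hence an ordering $\sigma$ may avoid every pattern of Figure~\ref{fig:VOC ICB} and still place some $y\in Y$ before an adjacent $x\in X$; for a trivial witness, take a single edge $xy$ and the ordering $(y,x)$, which avoids all four-vertex patterns vacuously. Your construction then sets $a_y<a_x$, so $I_x\not\supseteq I_y$, and the model you build makes $xy$ a non-edge. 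The whole digraph-$D$ step sits on top of this assumption (``pairs with $y\prec_\sigma x$ become non-edges of the constructed model automatically, which matches $G$ by the previous sentence''), so acyclicity of $D$ is not sufficient. To repair the argument you would need either to construct left endpoints compatible with edges whose $Y$-endpoint precedes the $X$-endpoint in $\sigma$, or to show that any $\sigma$ avoiding the three $2K_2$ patterns can be modified into one that additionally has $x\prec_\sigma y$ for all edges $xy$ while still avoiding those patterns; neither step is present.
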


\begin{figure}[ht]
  \centering
  \subcaptionbox{}{\begin{tikzpicture}
\useasboundingbox (-2.1, -0.1) rectangle (2.1, 0.9);
\tikzstyle{every node}=[draw,circle,fill=white,minimum size=5pt,
                        inner sep=0pt]
\node [label=below:$$] (x) at (-1.5, 0) {};
\node [label=below:$$] (y) at (-0.5, 0) {};
\node [label=below:$$,fill=black] (z) at ( 0.5, 0) {};
\node [label=below:$$,fill=black] (w) at ( 1.5, 0) {};
\draw [] (x) to [out=60, in=120] (z);
\draw [] (y) to [out=60, in=120] (w);
\end{tikzpicture}}
  \subcaptionbox{}{\begin{tikzpicture}
\useasboundingbox (-2.1, -0.1) rectangle (2.1, 0.9);
\tikzstyle{every node}=[draw,circle,fill=white,minimum size=5pt,
                        inner sep=0pt]
\node [label=below:$$] (x) at (-1.5, 0) {};
\node [label=below:$$,fill=black] (y) at (-0.5, 0) {};
\node [label=below:$$] (z) at ( 0.5, 0) {};
\node [label=below:$$,fill=black] (w) at ( 1.5, 0) {};
\draw [] (x) to [out=60, in=120] (w);
\draw [] (y) to (z);
\end{tikzpicture}}
  \subcaptionbox{}{\begin{tikzpicture}
\useasboundingbox (-2.1, -0.1) rectangle (2.1, 0.9);
\tikzstyle{every node}=[draw,circle,fill=white,minimum size=5pt,
                        inner sep=0pt]
\node [label=below:$$] (x) at (-1.5, 0) {};
\node [label=below:$$] (y) at (-0.5, 0) {};
\node [label=below:$$,fill=black] (z) at ( 0.5, 0) {};
\node [label=below:$$,fill=black] (w) at ( 1.5, 0) {};
\draw [] (x) to [out=60, in=120] (w);
\draw [] (y) to (z);
\end{tikzpicture}}
  \caption{
    Forbidden patterns of interval containment bigraphs. 
    White and black vertices are in $X$ and $Y$, respectively, or the other way around. 
    }
  \label{fig:VOC ICB}
\end{figure}
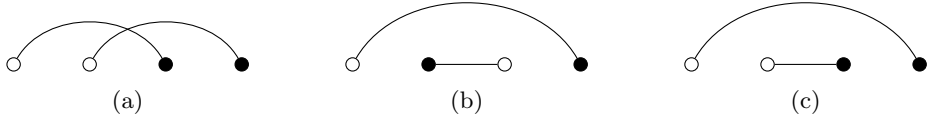

Figure~\ref{fig:ICB}\subref{fig:ICB-ordering} shows a vertex ordering of $G_1$, 
which does not contain any pattern in Figure~\ref{fig:VOC ICB}. 
\par
Notice that a $\{J_4, Q_4\}$-free labeling of a bipartite graph can be viewed as 
a vertex ordering which does not contain any pattern in Figure~\ref{fig:VOC ICB}. 
Thus, we have from Theorems~\ref{thm:good labeling} and~\ref{thm:VOC ICB} 
that any 12-representable bipartite graph is an interval containment bigraph.

\section{Interval containment bigraphs}\label{sec:ICB}
This section shows the equivalence of 
12-representable bipartite graphs and interval containment bigraphs 
and its consequences. 

\begin{theorem}\label{thm:ICB}
Let $G$ be a bipartite graph. 
The following statements are equivalent: 
\begin{enumerate}[label=\textup{(\roman*)}]
\item $G$ is $12$-representable; 
\item there is a $\{J_4, Q_4\}$-free labeling of $G$; 
\item $G$ is an interval containment bigraph. 
\end{enumerate}
\end{theorem}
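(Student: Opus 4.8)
The plan is to prove the cyclic chain of implications (i) $\Rightarrow$ (ii) $\Rightarrow$ (iii) $\Rightarrow$ (i). The implication (i) $\Rightarrow$ (ii) is immediate from Theorem~\ref{thm:good labeling}: a $12$-representable labeling of $G$ has no induced subgraph whose reduced form is one of $I_3$, $J_4$, $Q_4$, hence is in particular $\{J_4, Q_4\}$-free. For (ii) $\Rightarrow$ (iii), observe that each pattern in Figure~\ref{fig:VOC ICB}, once one forgets which of $X$ and $Y$ its vertices belong to, is an ordered four-vertex graph whose reduced form is $J_4$ (the first pattern) or $Q_4$ (the other two); therefore a $\{J_4, Q_4\}$-free labeling of $G$, viewed as a vertex ordering, avoids all three patterns as induced patterns, and Theorem~\ref{thm:VOC ICB} then yields that $G$ is an interval containment bigraph.

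The work of the proof is (iii) $\Rightarrow$ (i), where I would construct a $12$-representant of $G$ from an interval containment model. Fix a bipartition $(X, Y)$ of $G$ and a model $\{I_v : v \in V(G)\}$ of $G$; since such a model can be chosen with all $2n$ interval endpoints pairwise distinct (perturb if necessary), assume this. Label the vertices $1, \dots, n$ so that vertex $i$ has the $i$-th leftmost left endpoint, and let $c_1, \dots, c_n$ be the vertices listed in increasing order of right endpoints. Writing $X = \{a_1 < \dots < a_p\}$ and $Y = \{b_1 < \dots < b_q\}$ for the corresponding sets of labels, put
\[
  w \;=\; b_1 b_2 \cdots b_q \;\; c_1 c_2 \cdots c_n \;\; a_1 a_2 \cdots a_p ,
\]
that is, the labels of $Y$ in increasing order, then all labels in right-endpoint order, then the labels of $X$ in increasing order. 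The middle block lists every vertex, so $A(w) = [n]$; for the graph $G_1$ of Figure~\ref{fig:ICB} this is exactly the word shown in its caption.

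It remains to verify that $w$ is a $12$-representant of $G$, which I would do by examining $w_{\{u,v\}}$ in three cases. If $u$ and $v$ lie in the same part, say both in $Y$ (the case $u, v \in X$ is symmetric), then the $Y$-block lists them in increasing order of label, so $w_{\{u,v\}}$ begins with the ascending pair $\min(u,v)\,\max(u,v)$ and hence has a $12$-match --- in agreement with $uv \notin E(G)$, since $X$ and $Y$ are independent. If $u = x \in X$ and $v = y \in Y$, then $w_{\{x,y\}}$ is $y$, followed by $x$ and $y$ in right-endpoint order, followed by $x$; hence it equals $yxyx$ if the right endpoint of $I_x$ is smaller than that of $I_y$, and $yyxx$ otherwise. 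Now $yxyx$ always has a $12$-match, whereas $yyxx$ has one exactly when the label of $y$ is smaller than that of $x$, i.e.\ exactly when $I_y$ has the smaller left endpoint. Since $xy \in E(G)$ iff $I_x$ contains $I_y$ iff $I_x$ has both the smaller left endpoint and the larger right endpoint, it follows that $w_{\{x,y\}}$ has no $12$-match precisely when $xy \in E(G)$. The only genuine effort is this last case distinction; the point one must get right is the three-block shape of $w$ together with the bookkeeping of which block contributes which occurrences of $x$ and $y$, and the reduction to distinct endpoints causes no trouble.
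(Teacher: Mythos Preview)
Your proof is correct and follows essentially the same approach as the paper: the cyclic chain (i) $\Rightarrow$ (ii) $\Rightarrow$ (iii) $\Rightarrow$ (i) via Theorems~\ref{thm:good labeling} and~\ref{thm:VOC ICB}, and for (iii) $\Rightarrow$ (i) the same three-block word $w = \pi_y\pi_r\pi_x$ built from left-endpoint labels and right-endpoint order. The only cosmetic difference is that the paper allows $\pi_x$ and $\pi_y$ to be arbitrary permutations (arguing the same-part case from the fact that two consecutive blocks each contain both labels), whereas you fix them in increasing order and read off the $12$-match directly from the monotone outer block.
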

\begin{proof}
The implications (i) $\implies$ (ii) and (ii) $\implies$ (iii) 
follow from Theorems~\ref{thm:good labeling} and~\ref{thm:VOC ICB}, respectively. 
To prove (iii) $\implies$ (i), we construct a labeling and 
a 12-representant of an interval containment bigraph. 
See Example~\ref{ex:ICB} for an instance of construction. 
\par
Let $G$ be an interval containment bigraph with bipartition $(X, Y)$ 
such that there is an interval $I_v$ for each $v \in V(G)$ and 
$xy \in E(G) \iff I_x \supseteq I_y$ for any $x \in X$ and $y \in Y$. 
As stated in~\cite{Huang06-JCTSB}, 
it is possible to choose intervals so that all endpoints are distinct. 
Thus, without loss of generality, we can assume that all endpoints are distinct. 
Let $\ell_v$ and $r_v$ denote the left and right endpoint of the interval $I_v$, 
respectively. 
We assign a label $i$ to a vertex $v \in V(G)$ 
if $\ell_v$ is the $i$th point among all left endpoints from left to right. 
\par
Let $\pi_r$ be a permutation of $[n]$ such that 
the $i$th letter of $\pi_r$ is the label of a vertex $v$ 
if $r_v$ is the $i$th point among all right endpoints from left to right. 
Let $\pi_x$ and $\pi_y$ be arbitrary permutations of 
the labels of vertices of $X$ and $Y$, respectively. 
\par
We claim that $w = \pi_y\pi_r\pi_x$ is a 12-representant of $G$. 
Let $u$ and $v$ be two vertices of $G$ with labels $i$ and $j$, respectively. 
Without loss of generality, we assume $i < j$, that is, $\ell_u < \ell_v$. 
If $u, v \in X$ then $w_{\{i, j\}}$ has a 12-match 
since both $\pi_r$ and $\pi_x$ contain $i$ and $j$. 
Similarly, if $u, v \in Y$ then $w_{\{i, j\}}$ has a 12-match
since both $\pi_y$ and $\pi_r$ contain $i$ and $j$. 
Suppose $u \in X$ and $v \in Y$. 
If $r_u > r_v$ then $w_{\{i, j\}} = jjii$, and 
if $r_u < r_v$ then $w_{\{i, j\}} = jiji$. 
Thus, $w_{\{i, j\}}$ has no 12-match if and only if $I_u$ contains $I_v$. 
If $u \in Y$ and $v \in X$, then $w_{\{i, j\}}$ has a 12-match 
since $\pi_y$ is to the left of $\pi_x$, 
which is consistent with the fact that $I_v$ does not contain $I_u$. 
\end{proof}

\begin{exm}\label{ex:ICB}
The graph $G_1$ in Figure~\ref{fig:ICB}\subref{fig:ICB-graph} is an interval containment bigraph. 
The vertices are labeled based on the left endpoints of the intervals in Figure~\ref{fig:ICB}\subref{fig:ICB-model}. 
By reading the labels of the right endpoints from left to right, 
we obtain the permutation $\pi_r = 53284761$. 
Let $\pi_x = 1246$ and $\pi_y = 3578$. 
It is straightforward to check that the word 
$w = 3578.53284761.1246$ is a 12-representant of $G_1$ 
(the dots are not part of the word, they are only included as delimiters of the word parts as constructed in the proof of Theorem~\ref{thm:ICB}). 
\end{exm}

Recall that the class of interval containment bigraphs coincides with 
the class of bipartite graphs whose complements are circular-arc graphs~\cite{FHH99-Combinatorica} and 
the class of two-directional orthogonal ray graphs~\cite{STU10-DAM}. 
As stated in~\cite{FHH99-Combinatorica,STU10-DAM}, 
Trotter and Moore~\cite{TM76-DM} provide 
the list of minimal forbidden induced subgraphs for 
bipartite graphs whose complements are circular-arc graphs. 
Therefore, Theorem~\ref{thm:ICB} provides 
a forbidden induced subgraph characterization 
of 12-representable bipartite graphs. 
See~\cite{STU10-DAM} for figures of the forbidden subgraphs. 
\par
From the list of forbidden induced subgraphs for 12-representable bipartite graphs, 
we also have a characterization of 12-representable grid graphs. 
\begin{cor}\label{cor:Forbidden subgraphs}
A grid graph is $12$-representable if it contains 
no cycle of length $2n$ for $n \geq 4$ and 
no graph in Figure~\ref{fig:Forbidden subgraphs} as an induced subgraph. 
\end{cor}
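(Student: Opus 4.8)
The plan is to obtain this as a consequence of Theorem~\ref{thm:ICB}, by intersecting the known list of forbidden induced subgraphs for interval containment bigraphs with the class of grid graphs. First I would observe that a rectangular grid graph is bipartite and that bipartiteness is closed under taking induced subgraphs, so every grid graph is bipartite; hence, by Theorem~\ref{thm:ICB}, a grid graph is $12$-representable if and only if it is an interval containment bigraph. It therefore suffices to prove that a grid graph $G$ which contains no cycle of length $2n$ with $n\ge 4$ and no induced subgraph appearing in Figure~\ref{fig:Forbidden subgraphs} is an interval containment bigraph.

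Next I would invoke the forbidden induced subgraph characterization of interval containment bigraphs. Since this class coincides with the class of bipartite graphs whose complements are circular-arc graphs~\cite{FHH99-Combinatorica}, the list of Trotter and Moore~\cite{TM76-DM} (reproduced in~\cite{STU10-DAM}) yields a family $\mathcal{F}$ of graphs such that a bipartite graph is an interval containment bigraph exactly when it has no induced subgraph isomorphic to a member of $\mathcal{F}$. The key reduction is that the class of grid graphs is hereditary: an induced subgraph of an induced subgraph of a rectangular grid graph is again an induced subgraph of that rectangular grid graph. Hence a grid graph $G$ can contain a member $F$ of $\mathcal{F}$ as an induced subgraph only when $F$ itself is a grid graph, and so $G$ is an interval containment bigraph if and only if $G$ contains no member of $\mathcal{F}$ that happens to be a grid graph.

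The technical heart is then to determine precisely which members of $\mathcal{F}$ are grid graphs, and to check that this subfamily consists of exactly the even cycles $C_{2n}$ with $n\ge 4$ together with the graphs in Figure~\ref{fig:Forbidden subgraphs}. For the members that are grid graphs I would give explicit induced embeddings into a rectangular grid graph: $C_{2n}$ with $n\ge 4$ arises as the boundary cycle of a $3\times(n-1)$ rectangular grid graph (whose interior is nonempty precisely for $n\ge 4$, which is exactly what makes this boundary cycle induced), and each graph of Figure~\ref{fig:Forbidden subgraphs} is drawn on the grid directly. For every remaining member $F$ of $\mathcal{F}$ I would argue that $F$ is not a grid graph: for example $C_6$ is excluded because a rectangular grid graph contains no induced $6$-cycle (writing the six unit edge-vectors around such a cycle and discarding the degenerate cases, the cycle would lie in two consecutive rows, or two consecutive columns, of the grid, and then a rung would be a chord), and the other members are excluded by arguments of the same flavour, such as the presence of a vertex of degree greater than $4$ or of a configuration that forces a chord in every drawing on the grid. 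Carrying out this case analysis over the finitely many sporadic graphs and the few infinite families in $\mathcal{F}$ is the main obstacle, as it requires the explicit Trotter--Moore list together with a usable criterion for recognizing when a graph embeds as an induced subgraph of a rectangular grid graph.

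Finally I would combine the steps: a grid graph avoiding every cycle $C_{2n}$ with $n\ge 4$ and every graph of Figure~\ref{fig:Forbidden subgraphs} avoids every grid-graph member of $\mathcal{F}$, hence avoids every member of $\mathcal{F}$, hence is an interval containment bigraph by the Trotter--Moore characterization, and is therefore $12$-representable by Theorem~\ref{thm:ICB}.
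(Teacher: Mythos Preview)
Your proposal is correct and follows essentially the same route as the paper: both derive the corollary from Theorem~\ref{thm:ICB} by intersecting the Trotter--Moore list of forbidden induced subgraphs for interval containment bigraphs with the class of grid graphs. The paper's proof simply asserts that ``the other graphs in the list \ldots\ are not induced subgraphs of a rectangular grid graph,'' whereas you spell out the hereditary reduction and sketch the case analysis (including the exclusion of $C_6$); this is exactly the verification the paper leaves to the reader.
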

\begin{proof}
It is easy to verify that the other graphs in the list of forbidden induced subgraphs for 12-representable bipartite graphs (see~\cite{STU10-DAM} for figures) are not 
induced subgraphs of a rectangular grid graph. 
\end{proof}

\begin{figure}[ht]
  \centering
  \subcaptionbox{\label{fig:T3}}{\begin{tikzpicture}
\def\len{0.4}
\useasboundingbox (-1.2, -1.0) rectangle (1.2, 1.4);
\tikzstyle{every node}=[draw,circle,fill=white,minimum size=3pt,inner sep=0pt]
\node [] (c)  at (0, 0) {};
\node [] (a1) at (90:\len*1) {};
\node [] (a2) at (90:\len*2) {};
\node [] (a3) at (90:\len*3) {};
\node [] (b1) at (210:\len*1) {};
\node [] (b2) at (210:\len*2) {};
\node [] (b3) at (210:\len*3) {};
\node [] (c1) at (330:\len*1) {};
\node [] (c2) at (330:\len*2) {};
\node [] (c3) at (330:\len*3) {};
\draw [] (c) -- (a1) -- (a2) -- (a3);
\draw [] (c) -- (b1) -- (b2) -- (b3);
\draw [] (c) -- (c1) -- (c2) -- (c3);
\end{tikzpicture}}
  \subcaptionbox{\label{fig:G2}}{\begin{tikzpicture}
\def\len{0.5}
\useasboundingbox (-\len-0.1, -0.2) rectangle (\len+0.1, 4*\len+0.2);
\tikzstyle{every node}=[draw,circle,fill=white,minimum size=3pt,inner sep=0pt]
\node [] (v1) at (0, 1*\len) {};
\node [] (v2) at (0, 2*\len) {};
\node [] (v3) at (0, 3*\len) {};
\node [] (v4) at (0, 4*\len) {};
\node [] (v5) at (\len, 0) {};
\node [] (v6) at (\len, 1*\len) {};
\node [] (v7) at (\len, 2*\len) {};
\node [] (v8) at (-\len, 0) {};
\node [] (v9) at (-\len, 1*\len) {};
\node [] (v10) at (-\len, 2*\len) {};
\draw [] 
	(v1) -- (v2) -- (v3) -- (v4)
	(v5) -- (v6) -- (v7)
	(v8) -- (v9) -- (v10)
	(v9) -- (v1) -- (v6)
	(v10) -- (v2) -- (v7)
;
\end{tikzpicture}}
  \subcaptionbox{\label{fig:G3}}{\begin{tikzpicture}
\def\len{0.5}
\useasboundingbox (-0.1, -0.2) rectangle (\len+0.1, 4*\len+0.2);
\tikzstyle{every node}=[draw,circle,fill=white,minimum size=3pt,inner sep=0pt]
\node [] (v1) at (0, 0) {};
\node [] (v2) at (0, 1*\len) {};
\node [] (v3) at (0, 2*\len) {};
\node [] (v4) at (0, 3*\len) {};
\node [] (v5) at (0, 4*\len) {};
\node [] (u1) at (\len, 0) {};
\node [] (u2) at (\len, 1*\len) {};
\node [] (u3) at (\len, 2*\len) {};
\node [] (u4) at (\len, 3*\len) {};
\node [] (u5) at (\len, 4*\len) {};
\draw [] 
	(v1) -- (v2) -- (v3) -- (v4) -- (v5)
	(u1) -- (u2) -- (u3) -- (u4) -- (u5)
	(v3) -- (u3)
	(v4) -- (u4)
	(v5) -- (u5)
;
\end{tikzpicture}}
  \subcaptionbox{}{\begin{tikzpicture}
\def\len{0.5}
\useasboundingbox (-1.2, -1.2) rectangle (1.2, 1.2);
\tikzstyle{every node}=[draw,circle,fill=white,minimum size=3pt,inner sep=0pt]
\node [] (a) at (0, 0) {};
\node [] (b) at (0, 1*\len) {};
\node [] (c) at (0, 2*\len) {};
\node [] (d) at ($(225:\len*0) + (315:\len*1)$) {};
\node [] (e) at ($(225:\len*0) + (315:\len*2)$) {};
\node [] (f) at ($(225:\len*0) + (315:\len*3)$) {};
\node [] (g) at ($(225:\len*1) + (315:\len*0)$) {};
\node [] (h) at ($(225:\len*2) + (315:\len*0)$) {};
\node [] (i) at ($(225:\len*3) + (315:\len*0)$) {};
\node [] (j) at ($(225:\len*1) + (315:\len*1)$) {};
\draw [] 
	(a) -- (b) -- (c)
	(a) -- (d) -- (e) -- (f)
	(a) -- (g) -- (h) -- (i)
	(d) -- (j) -- (g)
;
\end{tikzpicture}}
  \subcaptionbox{}{\begin{tikzpicture}
\def\len{0.5}
\useasboundingbox (-3*\len-0.1, -1.5*\len-0.2) rectangle (3*\len+0.1, 2.5*\len+0.2);
\tikzstyle{every node}=[draw,circle,fill=white,minimum size=3pt,inner sep=0pt]
\node [] (v1) at (0, 2*\len) {};
\node [] (v2) at (0, 1*\len) {};
\node [] (v3) at (-3*\len, 0) {};
\node [] (v4) at (-2*\len, 0) {};
\node [] (v5) at (-1*\len, 0) {};
\node [] (v6) at ( 0*\len, 0) {};
\node [] (v7) at ( 1*\len, 0) {};
\node [] (v8) at ( 2*\len, 0) {};
\node [] (v9) at ( 3*\len, 0) {};
\node [] (v10) at (-1*\len, -1*\len) {};
\node [] (v11) at ( 0*\len, -1*\len) {};
\node [] (v12) at ( 1*\len, -1*\len) {};
\draw [] 
	(v1) -- (v2) -- (v6)
	(v3) -- (v4) -- (v5) -- (v6) -- (v7) -- (v8) -- (v9)
	(v10) -- (v11) -- (v12)
	(v5) -- (v10)
	(v6) -- (v11)
	(v7) -- (v12)
;
\end{tikzpicture}}
  \subcaptionbox{\label{fig:X}}{\begin{tikzpicture}
\def\len{0.5}
\useasboundingbox (-0.8, -0.7) rectangle (0.8, 1.7);
\tikzstyle{every node}=[draw,circle,fill=white,minimum size=3pt,inner sep=0pt]
\node [] (b) at ($(45:\len*1) + (135:\len*1)$) {};
\node [] (a) at ($(b) + (0, \len)$) {};
\node [] (c) at ($(45:\len*1) + (135:\len*0)$) {};
\node [] (d) at ($(45:\len*1) + (135:\len*-1)$) {};
\node [] (e) at ($(45:\len*0) + (135:\len*-1)$) {};
\node [] (f) at (0, 0) {};
\node [] (g) at ($(45:\len*0) + (135:\len*1)$) {};
\node [] (h) at ($(45:\len*-1) + (135:\len*1)$) {};
\node [] (i) at ($(45:\len*-1) + (135:\len*0)$) {};
\draw [] 
	(a) -- (b)
	(b) -- (c) -- (d)
	(e) -- (f) -- (g)
	(h) -- (i)
	(b) -- (g) -- (h)
	(c) -- (f) -- (i)
	(d) -- (e)
;
\end{tikzpicture}}
  \caption{Forbidden induced subgraphs of 12-representable grid graphs. }
  \label{fig:Forbidden subgraphs}
\end{figure}
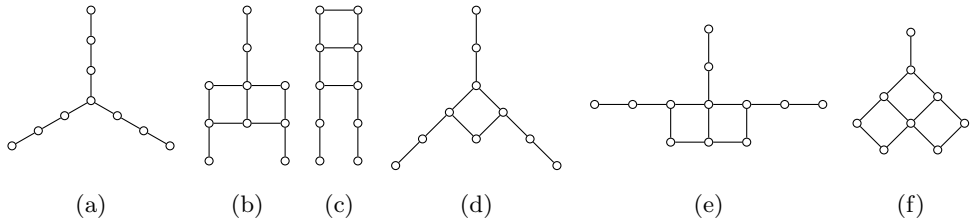

\begin{rem}\label{rem:Forbidden subgraphs}
Chen and Kitaev~\cite{CK22-DMGT} presented certain non-$12$-representable graphs 
and conjectured that these graphs would give us 
a forbidden induced subgraph characterization of 12-representable line grid graphs, 
see~\cite[Conjecture 3.6 and Figure 3.26]{CK22-DMGT}. 
Corollary~\ref{cor:Forbidden subgraphs} indicates that 
the graphs in~\cite[Conjecture 3.6]{CK22-DMGT} are not sufficient 
to characterize 12-representable line grid graphs. 
For example, the graph in Figure~\ref{fig:Forbidden subgraphs}\subref{fig:G2} 
is a proper induced subgraph of $G_i$, $i \in \{3, 4, 5\}$ in~\cite[Figure 3.26]{CK22-DMGT} and 
the graph in Figure~\ref{fig:Forbidden subgraphs}\subref{fig:G3} 
is a proper induced subgraph of $G_6$ in~\cite[Figure 3.26]{CK22-DMGT}. 
\end{rem}

Interval containment bigraphs can be recognized in $O(n^2)$ time~\cite{STU10-DAM} 
because their complements 
(i.e., circular-arc graphs that can be partitioned into two cliques) 
can be recognized in $O(n^2)$ time~\cite{ES93-SODA},~\cite[Section 13.3]{Spinrad03}. 
Thus, Theorem~\ref{thm:ICB} yields the following. 
\begin{cor}
$12$-representable bipartite graphs can be recognized in $O(n^2)$ time. 
\end{cor}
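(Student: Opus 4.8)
The plan is to obtain the corollary as an immediate consequence of Theorem~\ref{thm:ICB} together with the known $O(n^2)$-time recognition of interval containment bigraphs quoted just above it. Given an input graph $G$, I would first test bipartiteness by a standard $2$-colouring (breadth-first search), which also produces a bipartition $(X,Y)$ whenever it succeeds; this step costs $O(n+m)=O(n^2)$ time. If $G$ is not bipartite, it is not a $12$-representable bipartite graph and we answer ``no''. Otherwise, I would run the $O(n^2)$-time recognition algorithm for interval containment bigraphs of~\cite{STU10-DAM} on $(G,X,Y)$ — equivalently, test in $O(n^2)$ time whether $\overline{G}$ is a circular-arc graph that decomposes into two cliques, using~\cite{ES93-SODA} or~\cite[Section~13.3]{Spinrad03}. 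By the equivalence of (i) and (iii) in Theorem~\ref{thm:ICB}, the output of this algorithm decides whether $G$ is $12$-representable, and the total running time is $O(n^2)$.

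The only point requiring a word of care is that the interval containment bigraph recognition procedure expects a bipartition as part of its input, while a disconnected bipartite graph admits several bipartitions. This is harmless: being an interval containment bigraph is a property of the graph alone, not of the particular bipartition chosen — this is visible already from Theorem~\ref{thm:ICB}, since $12$-representability is defined with no reference to a bipartition — so any bipartition returned by the colouring step may be used. Alternatively, one can run the algorithm on each connected component with its own bipartition, which is unique up to swapping the two sides, and then note that a bipartite graph is an interval containment bigraph if and only if each of its components is: given models of the components, translating them onto pairwise-disjoint stretches of the line produces a model of $G$, and the converse restriction is trivial.

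I do not expect any genuine obstacle here; the statement is a direct corollary of Theorem~\ref{thm:ICB} and the complexity results cited immediately before it, and the work is confined to the elementary bookkeeping about bipartitions described above.
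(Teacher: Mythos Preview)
Your proposal is correct and follows exactly the paper's own argument: the paper states the corollary without a separate proof, deriving it directly from Theorem~\ref{thm:ICB} together with the $O(n^2)$ recognition of interval containment bigraphs cited just before. Your additional remarks on first testing bipartiteness and on the harmlessness of the choice of bipartition are sound elaborations, but the core reasoning is identical.
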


A graph $G$ is a \emph{circular-arc graph} 
if there is a circular arc $A_v$ on a circle for each vertex $v \in V(G)$ 
such that for any $u, v \in V(G)$, 
$uv \in E(G)$ if and only if $A_u$ intersects $A_v$. 
The set $\{A_v \colon\ v \in V(G)\}$ is called a 
\emph{model} or \emph{representation} of $G$. 
If the given bipartite graph $G$ is the complement of a circular-arc graph, 
the recognition algorithm~\cite{ES93-SODA},~\cite[Section 13.3]{Spinrad03} 
provides a model of $\overline{G}$. 
The model can be easily transformed into a model of interval containment bigraphs~\cite{Huang06-JCTSB}. 
Thus, we have the following from Theorem~\ref{thm:ICB}. 
\begin{cor}
A $12$-representant of a bipartite graph can be obtained in $O(n^2)$ time 
if the graph is $12$-representable. 
\end{cor}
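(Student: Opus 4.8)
The plan is to chain together three ingredients that are already in place: the equivalence of Theorem~\ref{thm:ICB}, the $O(n^2)$-time recognition of the relevant complement class, and the explicit word construction in the proof of Theorem~\ref{thm:ICB}. First I would invoke Theorem~\ref{thm:ICB} together with the equivalence recorded just before Corollary~\ref{cor:Forbidden subgraphs}: a bipartite graph $G$ is $12$-representable if and only if it is an interval containment bigraph, which holds if and only if $\overline{G}$ is a circular-arc graph that can be partitioned into two cliques. So I would run the recognition algorithm of~\cite{ES93-SODA} (see also~\cite[Section~13.3]{Spinrad03}) on $\overline{G}$; in $O(n^2)$ time it either certifies that $\overline{G}$ is not such a graph, in which case $G$ is not $12$-representable and there is nothing to produce, or it returns a circular-arc model of $\overline{G}$.

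Next I would convert this circular-arc model of $\overline{G}$ into a model of $G$ as an interval containment bigraph via the construction of Huang~\cite{Huang06-JCTSB} witnessing the equivalence of these two classes. This transformation only manipulates the $2n$ arc endpoints, so after an $O(n\log n)$ sort it runs in linear time; as noted in the proof of Theorem~\ref{thm:ICB}, one may also perturb the endpoints to make all $2n$ of them distinct. At this point we have intervals $I_v$, $v \in V(G)$, with distinct endpoints, realizing the adjacency of $G$ by containment between the $X$-side and $Y$-side.

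Finally I would apply verbatim the construction from the proof of Theorem~\ref{thm:ICB}: sort the left endpoints to obtain the vertex labeling, sort the right endpoints to obtain the permutation $\pi_r$, choose arbitrary orderings $\pi_x$ and $\pi_y$ of the labels of $X$ and of $Y$, and output $w = \pi_y \pi_r \pi_x$. Correctness is exactly the argument already given for Theorem~\ref{thm:ICB}, and this last step costs only $O(n\log n)$ for the sorts plus $O(n)$ for the concatenation. Summing over the three stages, the running time is dominated by the circular-arc recognition, giving the claimed $O(n^2)$ bound.

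The only point that needs genuine care — rather than routine bookkeeping — is making sure the cited subroutines do what is needed: that~\cite{ES93-SODA} for circular-arc graphs splitting into two cliques actually returns an explicit geometric model (not merely a decision) within $O(n^2)$ time, and that the transformation of~\cite{Huang06-JCTSB} is constructive and efficient on that model. Once these are pinned down, the rest is just sorting endpoints and concatenating three permutations as in the proof of Theorem~\ref{thm:ICB}, so I do not expect any substantive obstacle.
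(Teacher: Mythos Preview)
Your proposal is correct and follows essentially the same route as the paper: run the $O(n^2)$ recognition of~\cite{ES93-SODA},~\cite[Section~13.3]{Spinrad03} on $\overline{G}$ to obtain a circular-arc model, convert it via~\cite{Huang06-JCTSB} into an interval containment bigraph model, and then apply the explicit word construction $w=\pi_y\pi_r\pi_x$ from the proof of Theorem~\ref{thm:ICB}. The paper's justification is terser (it simply asserts that the cited algorithm outputs a model and that the transformation is easy), whereas you spell out the sorting costs and flag the need to check that the subroutines are constructive, but the substance is identical.
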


\section{Simple-triangle graphs}
This section shows the equivalence of 
12-representable graphs and complements of simple-triangle graphs 
and its consequences. 

\begin{theorem}\label{thm:PI}
Let $G$ be a graph. 
The following statements are equivalent: 
\begin{enumerate}[label=\textup{(\roman*)}]
\item $G$ is $12$-representable; 
\item there is an $\{I_3, J_4, Q_4\}$-free labeling of $G$; 
\item the complement $\overline{G}$ of $G$ is a simple-triangle graph. 
\end{enumerate}
\end{theorem}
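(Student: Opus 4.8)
\textbf{Proof proposal for Theorem~\ref{thm:PI}.}

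The plan is to follow the same three-cornered strategy used for Theorem~\ref{thm:ICB}, namely to close the cycle (i) $\implies$ (ii) $\implies$ (iii) $\implies$ (i). The first two implications are essentially immediate from results already in hand. For (i) $\implies$ (ii): if $G$ is $12$-representable it has a $12$-representant $w$, which induces a labeling of $G$; by Theorem~\ref{thm:good labeling} this labeling cannot contain $I_3$, $J_4$, or $Q_4$ in reduced form, so it is $\{I_3, J_4, Q_4\}$-free. For (ii) $\implies$ (iii): as observed in the paragraph following Theorem~\ref{thm:VOC PI}, an $\{I_3, J_4, Q_4\}$-free labeling of $G$, read as a vertex ordering, avoids all three patterns of Figure~\ref{fig:VOC PI}; hence by Theorem~\ref{thm:VOC PI} the complement $\overline{G}$ is a simple-triangle graph. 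So the real content is (iii) $\implies$ (i): given a simple-triangle model of $\overline{G}$, construct an explicit labeling of $G$ and a word that $12$-represents $G$.

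For (iii) $\implies$ (i) I would start from a simple-triangle model of $\overline G$: a point $p_v$ on $L_1$ and an interval $J_v$ on $L_2$ for each $v$, with $uv \in E(\overline G)$ iff the triangles $T_u, T_v$ intersect, i.e. $uv \notin E(G)$ iff the triangles intersect. By Theorem~\ref{thm:VOC PI} I may assume the points $p_v$ are ordered along $L_1$ consistently with a pattern-free vertex ordering $\sigma$, and (perturbing if necessary) that all the relevant coordinates — the points on $L_1$ and the left and right endpoints of the intervals on $L_2$ — are distinct. Label vertex $v$ by $i$ if $p_v$ is the $i$th point on $L_1$ from left to right; this is the ordering $\sigma$. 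Now I want a word $w$ over $[n]$ whose only non-$12$-matching pairs are the edges of $G$, i.e. the \emph{non}-intersecting triangle pairs. Guided by the interval-containment construction in Theorem~\ref{thm:ICB} and by Example~\ref{ex:PI} (the word $w = 464365235121$, which reads like three blocks interleaving the $L_2$-interval endpoints with $\sigma$ and $\sigma^{r}$), I would build $w$ as a concatenation of a few permutations: one block listing labels by the right endpoints $r_v$ of the intervals $J_v$ on $L_2$ (in some order), one block listing labels by left endpoints $\ell_v$, and copies of the $L_1$-order $\sigma$ (or its reverse) as ``bookend'' blocks. The triangle $T_u$ (point $p_u$, interval $[\ell_u,r_u]$) and $T_v$ with $p_u$ left of $p_v$ fail to intersect exactly when $p_v$ lies to the right of the entire shadow of $T_u$, which in coordinates is a condition comparing $p_v$'s position to $\ell_u, r_u$; the blocks must be arranged so that for each ordered pair $(u,v)$ with label$(u) <$ label$(v)$, the subword $w_{\{u,v\}}$ contains a $12$-match precisely when the triangles intersect. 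I would verify this by the same finite case analysis as in Theorem~\ref{thm:ICB}: enumerate the possible relative orders of $p_v$ against $\ell_u$ and $r_u$ (equivalently, the distinct ways two triangles with $p_u \prec p_v$ can or cannot meet), and in each case read off $w_{\{u,v\}}$ from the block structure to confirm the $12$-match status matches adjacency in $G$.

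The main obstacle is getting the block structure exactly right so that \emph{every} ordered pair is handled correctly by a single global word — in particular, ensuring that pairs whose triangles intersect always acquire a $12$-match regardless of which of several geometric sub-configurations they are in, while non-intersecting pairs never do. This is delicate because a triangle $T_u$ casts a ``widening shadow'' between $L_1$ and $L_2$, so the non-adjacency condition is not simply containment of one interval in another (as in the bipartite case) but a mixed comparison involving the $L_1$-point of one vertex and the $L_2$-endpoints of the other; the word must encode this asymmetry, presumably by placing the $L_1$-order blocks and the $L_2$-endpoint blocks in a specific nesting. Once the correct concatenation is identified, the verification itself is a routine but slightly lengthy case check, which I would organize exactly as in the proof of Theorem~\ref{thm:ICB} and illustrate with Example~\ref{ex:PI}.
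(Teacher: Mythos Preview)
Your cycle (i) $\Rightarrow$ (ii) $\Rightarrow$ (iii) is exactly the paper's, and you correctly locate all the work in (iii) $\Rightarrow$ (i). But your plan for that implication rests on a geometric misreading that steers you toward a multi-block construction you never actually pin down. The key fact you are missing is this: once the labeling is fixed by the $L_1$-order, if $p_u$ is left of $p_v$ then the triangles $T_u$ and $T_v$ are disjoint \emph{if and only if} the interval $I_u$ on $L_2$ lies entirely to the left of $I_v$. (If the bases overlap the triangles already meet on $L_2$; if $I_v$ lies left of $I_u$ then the right edge of $T_v$ crosses the left edge of $T_u$; and with $p_u < p_v$ the configuration ``$T_v$ entirely left of $T_u$'' cannot occur.) So the non-adjacency condition is \emph{not} the ``mixed comparison involving the $L_1$-point of one vertex and the $L_2$-endpoints of the other'' that you anticipate --- after labeling by $L_1$, disjointness is decided purely by the $L_2$ endpoints.

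With this observation the paper's construction is a single step rather than a concatenation of blocks: list the $2n$ interval endpoints on $L_2$ \emph{from right to left} and record at each position the label of the vertex owning that endpoint. Each label appears exactly twice, and for labels $i<j$ one checks immediately that $w_{\{i,j\}} = jjii$ (the unique $12$-match-free word on two letters each occurring twice) precisely when $I_u$ lies left of $I_v$, i.e.\ precisely when the triangles are disjoint, i.e.\ precisely when $ij$ is an edge of the $12$-representable graph. Your reading of Example~\ref{ex:PI} as ``three blocks interleaving \ldots\ with $\sigma$ and $\sigma^r$'' is the source of the detour: the word $464365235121$ is nothing more than the twelve $L_2$ endpoints of Figure~\ref{fig:PI}\subref{fig:PI-model} read right to left, with no bookend permutations at all.
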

\begin{proof}
The implications (i) $\implies$ (ii) and (ii) $\implies$ (iii) 
follow from Theorems~\ref{thm:good labeling} and~\ref{thm:VOC PI}, respectively. 
To prove (iii) $\implies$ (i), we construct a labeling and 
a 12-representant of the complement of a simple-triangle graph. 
See Example~\ref{ex:PI} for an instance of construction. 
\par
Recall that $L_1$ and $L_2$ are two horizontal lines 
in the plane with $L_1$ above $L_2$. 
Let $G$ be a simple-triangle graph such that 
there is a triangle $T_v$ for each $v \in V(G)$ and 
$uv \in E(G) \iff T_u \cap T_v \neq \emptyset$ for any $u, v \in V(G)$. 
Without loss of generality, we can assume that the endpoints of the triangles are distinct. 
Let $p_v$ and $I_v$ be the point on $L_1$ and the interval on $L_2$ of $T_v$, respectively. 
We assign a label $i$ to a vertex $v \in V(G)$ 
if $p_v$ is the $i$th point on $L_1$ from left to right. 
\par
We form a word $w$ using the endpoints of the intervals on $L_2$ so that 
the $i$th letter of $w$ is the label of a vertex $v$ if, 
among all endpoints of the intervals (i.e., both left and right endpoints) 
\emph{from right to left}, the $i$th endpoint is of $I_v$. 
We claim that $w$ is a 12-representant of the complement $\overline{G}$ of $G$. 
Let $u$ and $v$ be two vertices of $G$ with labels $i$ and $j$, respectively. 
Without loss of generality, we assume $i < j$, that is, $p_u < p_v$. 
It is easy to see that 
$w_{\{i, j\}} = jjii$ if and only if $I_u$ lies entirely to the left of $I_v$. 
Thus, $uv \in E(\overline{G})$ if and only if $w_{\{i, j\}}$ has no 12-match. 
\end{proof}

\begin{exm}\label{ex:PI}
The graph $G_2$ in Figure~\ref{fig:PI}\subref{fig:PI-graph} is a simple-triangle graph. 
The vertices are labeled based on the points on $L_1$ in Figure~\ref{fig:PI}\subref{fig:PI-model}. 
By reading the labels of the endpoints on $L_2$ \emph{from right to left}, 
we obtain the word $w = 464365235121$. 
It is straightforward to check that the word $w$ 
is a 12-representant of the complement $\overline{G_2}$ of $G_2$. 
\end{exm}

By Theorems~\ref{thm:VOC PI} and~\ref{thm:PI}, we have the following. 
\begin{cor}\label{cor:representant}
From an $\{I_3, J_4, Q_4\}$-free labeling of a $12$-representable graph $G$, 
a $12$-representant of $G$ can be obtained in $O(n^2)$ time 
without relabeling of $G$. 
\end{cor}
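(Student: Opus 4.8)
The plan is to prove Corollary~\ref{cor:representant} by tracing carefully through the constructive implication (iii) $\implies$ (i) of Theorem~\ref{thm:PI} and auditing the running time of each stage, given that we already start from an $\{I_3, J_4, Q_4\}$-free labeling (equivalently, by the remark just before Theorem~\ref{thm:VOC ICB}, a vertex ordering $\sigma$ of $G$ avoiding all three patterns of Figure~\ref{fig:VOC PI} when read on $\overline{G}$). So the starting data is the ordering $\sigma$ itself, and the output should be a word $w \in [n]^*$ that 12-represents $G$ under the \emph{same} labels induced by $\sigma$.

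First I would observe that $\sigma$, viewed as a vertex ordering of $\overline{G}$, is exactly an ordering with no induced pattern from Figure~\ref{fig:VOC PI}; this is the hypothesis of Theorem~\ref{thm:VOC PI}. That theorem is not merely an existence statement: its final clause asserts that from such a $\sigma$ one can compute, in $O(n^2)$ time, a simple-triangle model of $\overline{G}$ in which the left-to-right order of the points $p_v$ on $L_1$ coincides with $\sigma$. This is the step that does the real work, and I would invoke it as a black box. Next, the proof of Theorem~\ref{thm:PI} (implication (iii) $\implies$ (i)) tells us how to turn that model into a 12-representant of the complement of the simple-triangle graph — i.e.\ of $G$: one sorts all $2n$ interval-endpoints on $L_2$ from right to left and records the vertex label of each, obtaining the word $w$ of length $2n$. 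Crucially, because the model from Theorem~\ref{thm:VOC PI} has its $L_1$-points in the order $\sigma$, the labels assigned in the proof of Theorem~\ref{thm:PI} ("assign label $i$ to $v$ if $p_v$ is the $i$th point on $L_1$") are precisely the labels coming from $\sigma$ — so no relabeling of $G$ occurs, which is the clause "without relabeling of $G$" in the corollary.

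The time accounting I would present is then short: obtaining the model costs $O(n^2)$ by Theorem~\ref{thm:VOC PI}; reading off and sorting the $2n$ endpoints of the intervals $I_v$ costs $O(n \log n)$ (or $O(n)$ if the model is already produced with endpoints in sorted order, as is typical for such geometric constructions) and hence $O(n^2)$; and writing down the resulting word $w$ costs $O(n)$. The total is $O(n^2)$, and correctness of $w$ as a 12-representant of $G$ is exactly what the proof of Theorem~\ref{thm:PI} establishes. I would also remark that the equivalence (ii) $\iff$ (iii) of Theorem~\ref{thm:PI} guarantees that the input $\{I_3,J_4,Q_4\}$-free labeling really does give rise to a simple-triangle model of $\overline{G}$ via $\sigma$, so the hypotheses of Theorem~\ref{thm:VOC PI} are met.

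The main obstacle — and the point that needs the most care in writing — is verifying that the labels are genuinely preserved end-to-end. One must check that (a) the ordering $\sigma$ underlying the given $F$-free labeling is the same ordering that feeds Theorem~\ref{thm:VOC PI}, (b) the model returned by that theorem orders its $L_1$-points according to $\sigma$, and (c) the labeling rule in the proof of Theorem~\ref{thm:PI} reads those $L_1$-points left to right and therefore reproduces exactly $\sigma$'s labels. Once this chain of identifications is spelled out, the running-time bound is immediate and the corollary follows; I would keep the proof to a few sentences citing Theorems~\ref{thm:VOC PI} and~\ref{thm:PI} and this label-tracking observation.
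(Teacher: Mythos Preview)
Your proposal is correct and follows essentially the same route as the paper's own proof: view the $\{I_3,J_4,Q_4\}$-free labeling as the vertex ordering $\sigma$ avoiding the patterns of Figure~\ref{fig:VOC PI}, invoke the constructive clause of Theorem~\ref{thm:VOC PI} to build a simple-triangle model of $\overline{G}$ with $L_1$-points in the order $\sigma$ in $O(n^2)$ time, and then read off the 12-representant via the construction in the proof of Theorem~\ref{thm:PI}. Your write-up is more explicit about the time accounting and the label-preservation chain than the paper's terse version, but the argument is the same; one small wording slip to fix is that $\sigma$ is an ordering of $G$ (the complement of the simple-triangle graph in Theorem~\ref{thm:VOC PI}'s notation), not of $\overline{G}$, when checking the forbidden patterns.
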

\begin{proof}
An $\{I_3, J_4, Q_4\}$-free labeling of a graph $G$ can be viewed as 
a vertex ordering $\sigma$, which does not contain any pattern in Figure~\ref{fig:VOC PI}. 
Thus, by Theorem~\ref{thm:VOC PI}, we can obtain a model of the complement $\overline{G}$ of $G$ 
in $O(n^2)$ time such that 
$\sigma$ coincides with the ordering of the points on $L_1$. 
A 12-representant of $G$ can be obtained from the model, 
as shown in the proof of Theorem~\ref{thm:PI}. 
\end{proof}

Theorem~\ref{thm:PI} also yields the following, 
since simple-triangle graphs can be recognized in $O(nm)$ time~\cite{Takaoka20a-DAM}
and the complement of a graph can be obtained in $O(n^2)$ time. 
\begin{cor}
$12$-representable graphs can be recognized in $O(n(\bar{m}+n))$ time, 
where $\bar{m}$ is the number of edges of the complement of the given graph. 
\end{cor}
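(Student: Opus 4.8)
The plan is to derive this directly from the equivalence (i) $\iff$ (iii) of Theorem~\ref{thm:PI}, which turns recognition of $12$-representable graphs into recognition of simple-triangle graphs on the complement.

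Given an input graph $G$ on $n$ vertices, I would first construct its complement $\overline{G}$. From an adjacency-matrix representation this takes $O(n^2)$ time, and by definition $\overline{G}$ has exactly $n$ vertices and exactly $\bar{m}$ edges. By Theorem~\ref{thm:PI}, $G$ is $12$-representable if and only if $\overline{G}$ is a simple-triangle graph, so the task reduces to deciding the latter. I would then apply the recognition algorithm for simple-triangle graphs of~\cite{Takaoka20a-DAM}, which runs in $O(nm)$ time on a graph with $n$ vertices and $m$ edges; on $\overline{G}$ this costs $O(n\bar{m})$. Adding the two phases gives $O(n^2 + n\bar{m}) = O(n(\bar{m}+n))$ time in total, as claimed.

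There is essentially no obstacle here: all the substance is carried by Theorem~\ref{thm:PI} and the cited recognition algorithm. The only point to watch is the bookkeeping for the complementation step, which is unavoidable since the running time is stated in terms of $\bar{m}$ rather than $m$, but is harmlessly absorbed into the bound because $n^2 = O(n(\bar{m}+n))$. If one also wants a $12$-representant produced when the answer is affirmative, the recognition algorithm can be taken to return a model of $\overline{G}$, and the construction in the proof of Theorem~\ref{thm:PI} converts this model into a $12$-representant within the same asymptotic running time.
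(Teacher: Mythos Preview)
Your proposal is correct and follows essentially the same approach as the paper: the paper also derives the bound by complementing in $O(n^2)$ time, invoking Theorem~\ref{thm:PI}, and applying the $O(nm)$ recognition algorithm of~\cite{Takaoka20a-DAM} to $\overline{G}$. Your closing remark about extracting a $12$-representant from the model is precisely the content of the subsequent corollary in the paper.
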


The recognition algorithm~\cite{Takaoka20a-DAM} provides a vertex ordering 
which does not contain any pattern in Figure~\ref{fig:VOC PI}, and 
we have the following from Corollary~\ref{cor:representant}. 
\begin{cor}
A $12$-representant of a graph can be obtained in $O(n(\bar{m}+n))$ time 
if the graph is $12$-representable. 
\end{cor}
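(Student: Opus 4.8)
The plan is to chain the structural equivalence of Theorem~\ref{thm:PI} together with the two algorithmic ingredients already assembled: the recognition routine of~\cite{Takaoka20a-DAM} for simple-triangle graphs, and the conversion in Corollary~\ref{cor:representant} that turns an $\{I_3, J_4, Q_4\}$-free labeling into a $12$-representant. Concretely, suppose $G$ is $12$-representable. By the equivalence (i) $\iff$ (iii) of Theorem~\ref{thm:PI}, the complement $\overline{G}$ is a simple-triangle graph. The first step is therefore to form $\overline{G}$ explicitly, at a cost of $O(n^2)$ time. I then feed $\overline{G}$ to the recognition algorithm of~\cite{Takaoka20a-DAM}; since $\overline{G}$ genuinely is a simple-triangle graph, the algorithm succeeds and, as noted in the sentence preceding this corollary, outputs a vertex ordering avoiding every pattern of Figure~\ref{fig:VOC PI}.

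The observation that makes the chain fit together is that, by Theorem~\ref{thm:VOC PI}, the ordering witnessing that $\overline{G}$ is a simple-triangle graph is an ordering of the \emph{complement} of the recognized graph, that is, of $\overline{\overline{G}} = G$ itself. As recorded in the discussion following Theorem~\ref{thm:VOC PI}, a vertex ordering of $G$ that avoids the three patterns of Figure~\ref{fig:VOC PI} is exactly an $\{I_3, J_4, Q_4\}$-free labeling of $G$. Hence the recognition step hands me, with no relabeling, precisely the input required by Corollary~\ref{cor:representant}, and applying that corollary produces a $12$-representant of $G$ in $O(n^2)$ further time.

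For the running time I would account for the three stages separately. Forming $\overline{G}$ is $O(n^2)$; the recognition algorithm runs in $O(nm)$ time measured in the size of its input, and since its input $\overline{G}$ has $\bar{m}$ edges, this stage is $O(n\bar{m})$; and the final conversion is $O(n^2)$ by Corollary~\ref{cor:representant}. Summing, the total is $O(n^2 + n\bar{m}) = O(n(\bar{m}+n))$, the $n^2$ term being absorbed into $n(\bar{m}+n) = n\bar{m} + n^2$.

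The step most prone to error is bookkeeping the two complementations rather than any genuine combinatorial difficulty: the graph handed to the recognizer is $\overline{G}$, whereas the ordering it returns, and which Corollary~\ref{cor:representant} consumes, lives on $G$. Once this is kept straight, the identification of ``ordering avoiding Figure~\ref{fig:VOC PI}'' with ``$\{I_3, J_4, Q_4\}$-free labeling'' is immediate from the earlier remarks, and no new argument is needed. It is also worth confirming that the $O(nm)$ bound of~\cite{Takaoka20a-DAM} is stated in terms of the edges of its own input, so that substituting $\overline{G}$ yields $\bar{m}$ and not $m$; this is exactly the convention already used in the preceding recognition corollary, so the bound $O(n(\bar{m}+n))$ carries over verbatim.
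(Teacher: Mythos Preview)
Your proposal is correct and follows essentially the same approach as the paper: run the recognition algorithm of~\cite{Takaoka20a-DAM} on $\overline{G}$ to obtain a vertex ordering avoiding the patterns of Figure~\ref{fig:VOC PI}, interpret this as an $\{I_3,J_4,Q_4\}$-free labeling of $G$, and then invoke Corollary~\ref{cor:representant}. Your complementation bookkeeping and runtime accounting are accurate and simply make explicit what the paper leaves implicit in the one-line justification preceding the corollary.
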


\section{Concluding remarks}
The 12-representants constructed in the proof of Theorems~\ref{thm:ICB} and \ref{thm:PI} 
are of length $2n$, but they are not necessarily optimal (shortest possible). 
Indeed, for example, as shown in~\cite[Theorem~2.18]{CK22-DMGT}, 
the graph $G_1$ in Figure~\ref{fig:ICB}\subref{fig:ICB-graph} can be 
12-represented by a word of length $n+1$ 
(the labeling used in~\cite[Theorem~2.18]{CK22-DMGT} is different 
from that shown in Figure~\ref{fig:ICB}\subref{fig:ICB-graph}). 
It is still an open question to improve the upper bound of 
the length of 12-representants of graphs. 
\par
Section~\ref{sec:ICB} gives a forbidden induced subgraph characterization 
for 12-representable bipartite graphs and grid graphs 
from the equivalence between 12-representable bipartite graphs and interval containment bigraphs. 
Although the characterization has been known for interval containment bigraphs, 
no such characterization is known for simple-triangle graphs~\cite{Takaoka20a-DAM}. 
Thus, it is still an open question to characterize the class of 12-representable graphs 
in terms of forbidden induced subgraphs. 
\par
In this paper, we obtained some results on 12-representable graphs 
from the known facts on interval containment bigraphs and simple-triangle graphs. 
Studying these graphs via 12-representability 
is a possible direction for further research.

\section*{Acknowledgments}
The author is grateful to the reviewers for their careful reading and helpful comments.

\bibliographystyle{abbrvurl}
\bibliography{ref}
\end{document}